\documentclass[11pt,a4paper]{article}

\usepackage[utf8]{inputenc}
\usepackage{amsmath, amssymb, amsthm}
\usepackage{geometry}
\usepackage{hyperref}
\usepackage{booktabs}
\usepackage{graphicx}

\usepackage{tikz}
\usetikzlibrary{calc, shapes, positioning, arrows.meta,
  decorations.pathreplacing, decorations.pathmorphing}
\usepackage{amsmath, amssymb}
\usepackage[ruled,vlined,linesnumbered]{algorithm2e}

\theoremstyle{definition}
\newtheorem{definition}{Definition}[section]
\newtheorem{axiom}{Axiom}

\theoremstyle{plain}
\newtheorem{theorem}{Theorem}[section]
\newtheorem{lemma}[theorem]{Lemma}

\newtheorem{corollary}[theorem]{Corollary}

\theoremstyle{remark}
\newtheorem{remark}{Remark}[section]

\title{CIPS: Maximal Certified Persistence in Cyber-Physical Systems}

\author{Avinash Malik\\ avinash.malik@auckland.ac.nz \\ Department of Electrical, Computer, and Software
  Engineering, \\ University of Auckland, New Zealand}
\date{\today}

\begin{document}

\maketitle

\begin{abstract}
  We introduce the \textit{Theory of Certified Information Persistence
    Systems} (CIPS), a universal mathematical framework for computing
  the maximal certified persistence of information in cyber-physical
  systems (CPS). CIPS provides an axiomatic foundation that separates
  the continuous evolution of state validity from discrete, memoryless
  control interventions. By accommodating digital sampling and execution
  latency through robust set contraction, the framework mathematically
  isolates a system's \textit{maximal certified persistence horizon}---a
  strict theoretical upper bound on safe autonomous operation relative
  to the system's defined metric growth bounds. Our central
  representation theorem proves that CIPS provides a universal
  representation framework: every empirically safe scheduling policy, is
  structurally isomorphic to a conservative surrogate evaluation within
  a canonical CIPS. By dynamically targeting this latency-compensated
  canonical horizon, the framework minimizes conservatism relative to
  the bounding assumptions, achieving an optimal certified scheduling
  policy, minimizing computational and network interventions while
  mathematically guaranteeing continuous physical safety.
\end{abstract}

\section{Introduction}

Modern Cyber-Physical Systems (CPS)~\cite{lee2017introduction}, ranging
from autonomous vehicle platoons to agile unmanned aerial vehicles
(UAVs), operate under strict safety, stability, and performance
requirements~\cite{mellinger2011minimum,zheng2015stability}. To
mathematically guarantee operational correctness, these systems
construct and evaluate formal validity certificates. These range from
continuous Control Barrier Functions (CBFs)~\cite{ames2019control} for
collision avoidance~\cite{wang2017safety,ariu2017chance,jo2026geometry}
and forward reachability sets for safe
navigation~\cite{althoff2014reachability}, to Model Predictive Control
(MPC)~\cite{mayne2000constrained} feasibility bounds. However,
maintaining, computing, or transmitting these real-time guarantees
incurs prohibitive computational, energetic, and bandwidth costs. For
instance, continuous Vehicle-to-Vehicle (V2V) broadcasting congests
shared wireless spectrums, while high-frequency onboard optimization
rapidly drains limited robotic battery reserves.

Because computing or transmitting these validity certificates is
inherently expensive, systems cannot afford continuous evaluation.
Intermittent execution is practically mandatory. The critical
operational challenge then becomes identifying the optimal times to
recompute, transmit, or refresh this information before physical safety
is violated.

Currently, every engineering and computational discipline solves this
intermittent scheduling problem independently. Event-Triggered Control
(ETC)~\cite{tabuada2007event} and Self-Triggered Control
(STC)~\cite{heemels2012introductory} dictate when sensors should sample
or networks should transmit to preserve stability; adaptive Ordinary
Differential Equation (ODE) solvers decide when to recompute numerical
steps based on local error estimates; and distributed computer science
architectures construct bespoke Time-To-Live (TTL) policies to
invalidate stale network cache data~\cite{bernstein1987concurrency}.
Each field defines its own application-specific validity certificate and
derives a bespoke, isolated regeneration policy.

This fragmented landscape naturally prompts a fundamental question:
\textit{Is there an all-encompassing representation and theory of
  maximal certified information persistence?}

Yes, there is. In this paper, we introduce the \textit{Theory of
  Certified Information Persistence Systems} (CIPS), a unified
mathematical framework for computing the maximal certified persistence
of information. Rather than proposing yet another domain-specific
scheduling algorithm, CIPS provides an abstraction that separates
domain-specific certificate construction from domain-independent
persistence reasoning. By defining exactly how long a computed
certificate remains valid, CIPS fundamentally computes $\tau^*(I)$ for some
information object $I$: the greatest safe reuse interval permitted by an
underlying certification under the assumed metric drift bounds.
Everything else in a system's execution follows from this supremum.

The \textbf{contributions} of this paper are structured as follows:
\begin{enumerate}
\item \textbf{A New Mathematical Abstraction:} We define CIPS as a formal 5-tuple, separating domain-specific certificate construction from domain-independent temporal reasoning.
\item \textbf{A Cornerstone Representation Theorem:} We prove that safe
  scheduling policies admit canonical CIPS representations and are
  strictly upper-bounded by the maximal persistence horizon. This
  establishes that periodic scheduling, heuristic event-triggering,
  adaptive scheduling, etc are all strictly conservative approximations
  of the true CIPS horizon.
\item \textbf{A Theory of Maximal Persistence:} We derive optimal,
  latency-aware regeneration policies that follow directly from the
  maximal horizon, allowing systems to safely maximize temporal
  progress.
\item \textbf{Composition of Heterogeneous Certificates:} We introduce a
  compositional framework allowing fundamentally distinct constraints
  (e.g., deterministic kinematics, stochastic noise, and network
  latency, etc) to be combined seamlessly into a single, unified
  persistence horizon.
\item \textbf{Empirical Computability:} Through case studies in
  autonomous V2V platooning and drone Simultaneous Localization and
  Mapping (SLAM) navigation, we demonstrate how existing CPS problems
  map into CIPS, proving that the maximal persistence horizon can be
  computed in practice to aggressively reduce unnecessary computation
  and communication.
\end{enumerate}

\section{A practical CPS application: V2V Platooning and the Engineers
  Implementation}
\label{sec:v2v-platooning-setup}

To ground the CIPS framework in a practical application, we evaluate a
Vehicle-to-Vehicle (V2V) platooning scenario. Figure~\ref{fig:v2v_setup}
illustrates the physical configuration of the system.

\begin{figure}[htpb]
    \centering
    \begin{tikzpicture}[>=Stealth, scale=0.9, every node/.style={transform shape}]
        \draw[thick, black!60] (-2, 0) -- (12, 0);
        \draw[dashed, black!40] (-2, -1) -- (12, -1);
        \draw[thick, black!60] (-2, -2) -- (12, -2);
        
        \filldraw[fill=blue!10, draw=black, thick, rounded corners=2pt] (0, 0.1) rectangle (2.5, 1.1);
        \filldraw[fill=black!80] (0.4, 0.1) circle (0.25);
        \filldraw[fill=black!80] (2.1, 0.1) circle (0.25);
        \node at (1.25, 0.6) {\textbf{Follower}};
        \node[below] at (1.25, -0.2) {$x_T(t)$};
        
        \filldraw[fill=red!10, draw=black, thick, rounded corners=2pt] (6.5, 0.1) rectangle (9.0, 1.1);
        \filldraw[fill=black!80] (6.9, 0.1) circle (0.25);
        \filldraw[fill=black!80] (8.6, 0.1) circle (0.25);
        \node at (7.75, 0.6) {\textbf{Leader}};
        \node[below] at (7.75, -0.2) {$x_L(t)$};
        
        \draw[<->, thick] (2.5, 1.5) -- (6.5, 1.5) node[midway, above] {Gap $d(t)$};
        \draw[dashed] (2.5, 1.1) -- (2.5, 1.7);
        \draw[dashed] (6.5, 1.1) -- (6.5, 1.7);
        
        \draw[->, thick, blue, decorate, decoration={snake, amplitude=1.5mm, segment length=3mm, post length=2mm}] 
            (7.75, 1.3) .. controls (6.0, 2.5) and (3.0, 2.5) .. (1.25, 1.3) 
            node[midway, above, yshift=2mm, text=black] {Triggered V2V Broadcast ($L_{\text{reg}}$)};
            
        \draw[->, thick] (2.5, 0.6) -- (3.5, 0.6) node[right] {$v_T$};
        \draw[->, thick] (9.0, 0.6) -- (10.0, 0.6) node[right] {$v_L$};
    \end{tikzpicture}
    \caption{Physical state space configuration of the V2V platooning
      system.}
    \label{fig:v2v_setup}
\end{figure}

\subsection{Mathematical Problem Formulation}
\label{sec:math-probl-form}

The system state is defined by the relative separation between the
leader and follower vehicles, given by $d(t) = x_L(t) - x_T(t)$. The
\textit{engineering objective} is to transmit packets optimally in order
to maintain a safe gap while subjected to dynamic and computational
constraints. The system parameters are defined as follows:

\begin{itemize}
\item \textbf{Ground-truth safety boundary:} A collision or safety
  violation occurs if the gap falls below $d_{\text{min}} = 3.0$ m. This
  establishes the ground-truth admissible set $\mathcal{A} = \{ d \mid d \geq 3.0 \}$.
\item \textbf{Metric drift rate ($L$):} Under worst-case relative
  braking kinematics, the inter-vehicle gap can close at a maximum rate
  of $L = 8.5$ m/s.
\item \textbf{Computational constraints:} The physical system operates
  in discrete time, meaning the vehicle evaluates and acts upon
  information during zero-order hold blind spots of $\Delta t = 0.30$ s.
  Additionally, when a new V2V broadcast is eventually triggered, the
  network requires a transmission latency of $L_{\text{reg}} = 0.10$ s
  to deliver the packet.
\end{itemize}

Because the follower operates in an open-loop state between the $0.30$ s
updates, the unmitigated trajectory, $d_{\text{unmitigated}}(t)$, must
not be permitted to reach the ground-truth boundary
$\partial \mathcal{A}$. The CIPS framework mathematically guarantees safety by computing
a robust contracted set $\mathcal{A}_{\Delta t}$ and triggering updates ahead of the
latency delay.

\subsection{Algorithmic Implementation}
\label{sec:algor-impl}

\begin{algorithm}[H]
  \caption{Engineer's Workflow: CIPS Formal 5-Tuple Instantiation}
  \label{alg:cips_pipeline}
  \DontPrintSemicolon
  \SetAlgoLined
  \KwIn{Raw vehicle specifications and absolute physical safety constraints}
  \KwOut{Instantiated Ground-Truth CIPS Formal Tuple $\mathfrak{C} = (\mathcal{I}, \mathcal{C}, \mathcal{A}, C, R)$}

  \BlankLine
  \tcp{Step 1: Define Operational State Spaces}
  $\mathcal{I} \leftarrow \text{Follower vehicle trajectory state } x_T(t) \in \mathbb{R}^n$\;
  $\mathcal{C} \leftarrow \text{Relative separation metric space } \mathbb{R}$\;
  \BlankLine

  \tcp{Step 2: Construct Absolute Physical Admissible Set}
  $\mathcal{A} \leftarrow \{ d \in \mathcal{C} \mid d \geq 3.0\,\text{m} \}$ \tcp*{Ground-truth collision threshold $c_{\text{adm}} = 3.0\,\text{m}$}
  \BlankLine

  \tcp{Step 3: Define Metric Mapping and Reset Operator}
  $C \leftarrow \text{Inter-vehicle distance mapping } d(t) = x_L(t) - x_T(t)$\;
  $R \leftarrow \text{Memoryless V2V broadcast and emergency control reset}$\;
  \BlankLine

  \tcp{Step 4: Instantiate Formal Tuple}
  $\mathfrak{C} \leftarrow (\mathcal{I}, \mathcal{C}, \mathcal{A}, C, R)$\;
  \Return $\mathfrak{C}$ \tcp*{Pass pure ground-truth tuple to CIPS runtime execution engine}
\end{algorithm}

The primary advantage of the CIPS framework is the clean separation of
physical system definition from runtime execution. A systems engineer
does not manually program tracking loops, calculate robust safety
margins, or tune trigger conditions. Instead, their sole responsibility
is to map the physical state space, ground-truth safety constraints, and
reset actions into the formal 5-tuple $\mathfrak{C} = (\mathcal{I}, \mathcal{C}, \mathcal{A}, C, R)$.

Algorithm~\ref{alg:cips_pipeline} demonstrates the procedural workflow
the engineer follows to instantiate this ground-truth framework for the
V2V platooning scenario. Once constructed, the tuple $\mathfrak{C}$ is passed
directly to the underlying CIPS runtime engine alongside the hardware
execution parameters ($L, \Delta t, L_{\text{reg}}$). The runtime engine
automatically synthesizes the derived robust admissible region
$\mathcal{A}_{\Delta t}$, compensates for latency, and guarantees physical safety while
achieving \textit{maximal persistence} by delaying network updates as
long as theoretically possible.

In the rest of the paper we describe the formal theory of CIPS along
with the CIPS runtime algorithm to implement the CIPS engine.

\section{CIPS Definition and Axiomatic Foundations}
\label{sec:cips-defin-axiom}

This section starts with giving the fundamental definition of the CIPS
tuple along with the axiom that the CIPS tuple needs to satisfy.

\begin{definition}[Certified Information Persistence System]
\label{def:cips}
A \textit{Certified Information Persistence System} (CIPS) operating under sampling period $\Delta t \ge 0$ and growth bound $L > 0$ is a 5-tuple:
\begin{equation}
\mathfrak{C} = (\mathcal{I}, \mathcal{C}, \mathcal{A}, C, R)
\end{equation}
where:
\begin{enumerate}
    \item $\mathcal{I}$ is a set of \textit{information objects}.
    \item $(\mathcal{C}, d_{\mathcal{C}})$ is a metric space termed the \textit{certificate space}.
    \item $\mathcal{A} \subseteq \mathcal{C}$ is a non-empty closed subset termed the ground-truth \textit{admissible region}.
    \item $C : \mathcal{I} \times \mathbb{R}_{\ge 0} \rightarrow \mathcal{C}$ is an $L$-Lipschitz \textit{certificate mapping}; that is, for all $I \in \mathcal{I}$ and $t_1, t_2 \ge 0$:
    \begin{equation}
    d_{\mathcal{C}}(C(I,t_1), C(I,t_2)) \le L |t_1 - t_2|
    \end{equation}
    \item $R : \mathcal{I} \rightarrow \mathcal{I}$ is a memoryless (Markovian) \textit{regeneration operator}.
\end{enumerate}
\end{definition}

\begin{definition}[Robust Sampled Admissible Region]
  Given a ground-truth admissible region $\mathcal{A}$, growth bound
  $L$, and sampling interval $\Delta t \ge 0$, the \textit{Robust Sampled
    Admissible Region} $\mathcal{A}_{\Delta t}$ is defined via metric set erosion:
  \begin{equation}
    \mathcal{A}_{\Delta t} := \{ x \in \mathcal{A} \mid \mathcal{B}_{L \cdot \Delta t}(x) \subseteq \mathcal{A} \}
  \end{equation}
  where
  $\mathcal{B}_{r}(x) = \{ y \in \mathcal{C} \mid d_{\mathcal{C}}(x, y) \le r \}$ is the closed metric ball
  of radius $r$. In the continuous system ($\Delta t = 0$), the robust region
  recovers the exact ground-truth admissible set: $\mathcal{A}_0 = \mathcal{A}$.
\end{definition}

\begin{remark}[Dynamic and Adaptive Sampling]
\label{rem:varying_deltat}
The sampling interval $\Delta t$ need not remain static across the
operational lifetime of the system. In adaptive sampling algorithms,
multi-rate execution models, or event-triggered control schemes,
$\Delta t$ may vary dynamically at each discrete evaluation instance
$k$, denoted as $\Delta t_k = t_{k+1} - t_k$. Under variable sampling, the
robust region contracts or expands dynamically as
$\mathcal{A}_{\Delta t_k}$, where a smaller sampling step
($\Delta t_k \to 0$) relaxes set erosion toward the true boundary
$\mathcal{A}$. Alternatively, systems operating with non-uniform sampling under a
fixed robust set $\mathcal{A}_{\Delta t}$ remain provably safe by instantiating
$\Delta t$ as the uniform upper bound $\Delta t_{\max} := \sup_k \Delta t_k$.
\end{remark}

The structure $\mathfrak{C}$ is strictly constrained by a fundamental axiom:


\begin{axiom}[Memoryless Regeneration]
  The operator $R$ is strictly memoryless (Markovian). Application of
  $R$ strictly restores state certification into the interior of the
  robust admissible region, entirely independent of the prior trajectory
  history $\{C(I,t) : 0 \le t \le t_k\}$:
  \begin{equation}
    \forall I \in \mathcal{I}, \quad C(R(I), 0) \in \text{int}(\mathcal{A}_{\Delta t})
  \end{equation}
\end{axiom}

\subsection{Derived Definitions}

With the foundational CIPS 5-tuple and its core axioms established, the
framework must bridge the gap between static state definitions and
dynamic temporal execution. The following derived definitions formalize
exactly \textit{when} a discrete system transitions from safe to unsafe,
and mathematically define the maximal duration an information object can
be trusted without requiring regeneration. These constructs form the
critical functional link necessary to evaluate operational persistence
and derive the optimal scheduling theorems presented in Section 4.

\begin{definition}[Sampled Validity State]
  \label{def:validity}
  The sampled validity of an information object $I \in \mathcal{I}$ at an evaluated
  discrete time $t_k = k \cdot \Delta t$ is defined as the Boolean indicator
  function:
  \begin{equation}
    V(I,t_k) = \begin{cases} 
      1, & C(I,t_k) \in \mathcal{A}_{\Delta t} \\ 
      0, & \text{otherwise} 
    \end{cases}
  \end{equation}
\end{definition}

\begin{definition}[Sampled Persistence Horizon]
  The \textit{sampled persistence horizon} $\tau_{\Delta t}(I)$ is the discrete first-exit time corresponding to the earliest sample instance where the certificate mapping evaluates outside the robust admissible region $\mathcal{A}_{\Delta t}$:
  \begin{equation}
    \tau_{\Delta t}(I) = \inf \{ t_k = k \cdot \Delta t \ge 0 : V(I, t_k) = 0 \}
  \end{equation}
  In the exact continuous limit ($\Delta t \to 0$), we denote the
  canonical/exact continuous persistence horizon as:
  \begin{equation}
    \label{eq:1}
    \tau(I) = \inf \{ t \ge 0 : C(I,t) \notin \mathcal{A} \}
  \end{equation}
\end{definition}

\section{Theoretical Framework \& Guarantees}
\label{sec:theor-fram-}

We structure the theoretical properties of CIPS into a five-tiered
hierarchy: fundamental safety guarantees, practical surrogate 
approximations, formal representation theorems, operational 
scheduling optimality, and structural algebraic properties.

\subsection{Fundamental Guarantees (Well-Posedness \& Safety)}
\label{sec:fund-guar-well}

This subsection establishes the core viability and baseline safety of the CIPS framework. Theorem~\ref{thm:boundary} mathematically guarantees that the system perfectly evaluates to the safe boundary at the exact persistence horizon. Lemma~\ref{lem:intersample} provides the critical operational guarantee that the physical system remains safe strictly in between discrete sampling evaluations. Finally, Lemma~\ref{lem:maximality} establishes that no safe scheduling policy can indefinitely defer regeneration beyond this computed theoretical horizon without risking constraint violation.

\begin{theorem}[Boundary Exit Theorem]
  \label{thm:boundary}
  For any CIPS, if the sampled persistence horizon is finite
  ($\tau_{\Delta t}(I) < \infty$), then at the exact first-exit time
  $t_k = \tau_{\Delta t}(I)$, the true certificate resides strictly outside or
  on the boundary of the robust admissible region:
  $C(I, \tau_{\Delta t}(I)) \notin \text{int}(\mathcal{A}_{\Delta t})$. In the continuous exact
  limit ($\Delta t \to 0$), the geometric boundary evaluation is strictly
  recovered: $C(I, \tau(I)) \in \partial \mathcal{A}$.
\end{theorem}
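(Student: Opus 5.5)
The plan has two parts, matching the two claims of Theorem~\ref{thm:boundary}: first the sampled statement, then the continuous limit $\Delta t = 0$.

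\textbf{Sampled case.} Assume $\tau_{\Delta t}(I) < \infty$ and $\Delta t > 0$. The set of sample instants $\{t_k = k\Delta t\}$ is discrete and bounded below. Hence the infimum defining $\tau_{\Delta t}(I)$ is a minimum, so $\tau_{\Delta t}(I) = t_{k^*}$ for some index $k^*$ with $V(I,t_{k^*}) = 0$. By Definition~\ref{def:validity}, this means $C(I,t_{k^*}) \notin \mathcal{A}_{\Delta t}$. Since $\text{int}(\mathcal{A}_{\Delta t}) \subseteq \mathcal{A}_{\Delta t}$, it follows at once that $C(I,t_{k^*}) \notin \text{int}(\mathcal{A}_{\Delta t})$, which is the claimed weaker statement ``outside or on the boundary.'' The only subtlety is that the infimum is attained, and discreteness of the sample grid gives this directly.

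\textbf{Continuous case.} Here $\tau(I) = \inf\{t \ge 0 : C(I,t) \notin \mathcal{A}\}$, assumed finite. The argument has three steps.
\begin{enumerate}
\item Since $C(I,\cdot)$ is $L$-Lipschitz, it is continuous. Because $\mathcal{A}$ is closed, the exit set $E = \{t : C(I,t) \notin \mathcal{A}\}$ is open in $\mathbb{R}_{\ge 0}$.
\item For every $t < \tau(I)$ we have $C(I,t) \in \mathcal{A}$. If $\tau(I) > 0$, continuity and closedness of $\mathcal{A}$ give $C(I,\tau(I)) \in \mathcal{A}$. If $\tau(I) = 0$, I will argue instead that $C(I,0) \in \mathcal{A}$ by appealing to the regeneration axiom: a freshly regenerated object satisfies $C(I,0) \in \text{int}(\mathcal{A}_{\Delta t}) \subseteq \mathcal{A}$. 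I will therefore state explicitly the standing assumption that $I$ lies in the range of $R$ (or that $C(I,0) \in \mathcal{A}$). Without it the claim fails, because $C(I,0)$ could lie deep outside $\mathcal{A}$.
\item Since $\tau(I)$ is an infimum of $E$, there exist $t_n \downarrow \tau(I)$ with $C(I,t_n) \notin \mathcal{A}$. By continuity, $C(I,t_n) \to C(I,\tau(I))$. The limit is therefore a limit of points of the complement of $\mathcal{A}$, so it lies in the closure of $\mathcal{C} \setminus \mathcal{A}$ and is not in $\text{int}(\mathcal{A})$. Combined with membership in $\mathcal{A}$ from the previous step, this gives $C(I,\tau(I)) \in \mathcal{A} \setminus \text{int}(\mathcal{A}) = \partial\mathcal{A}$, using again that $\mathcal{A}$ is closed.
\end{enumerate}

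\textbf{Main obstacle.} The delicate point is interpreting the continuous statement. Literally taking $\Delta t \to 0$ in the sampled horizon would require convergence of $\tau_{\Delta t}(I)$ to $\tau(I)$, and this can fail for tangential touches of the boundary. I will avoid this by treating the continuous claim as the statement about $\tau(I)$ at $\Delta t = 0$, where $\mathcal{A}_0 = \mathcal{A}$. If a genuine limit statement is desired, I would add a remark: sampled exit times $t_{k^*}$ satisfy $t_{k^*} - \Delta t < \tau$-type bounds via the Lipschitz erosion margin $L\Delta t$, but I would not rely on this for the main proof.
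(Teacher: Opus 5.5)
Your proof is correct and has the same skeleton as the paper's: the sampled claim follows from the definition of the first-exit time, and the continuous claim from continuity of $t \mapsto C(I,t)$ plus closedness of $\mathcal{A}$. It is, however, more rigorous at each step.

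In the sampled case the paper argues that an interior point would admit an open ball inside $\mathcal{A}_{\Delta t}$, ``contradicting the definition of an exit time,'' which is not the operative reason on a discrete grid. Your observation gives the stronger conclusion directly: the infimum over $\{k\Delta t\}$ is attained, so $V(I,\tau_{\Delta t}(I)) = 0$ and hence $C(I,\tau_{\Delta t}(I)) \notin \mathcal{A}_{\Delta t}$.

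In the continuous case the paper speaks of a limit $\mathcal{A}_{\Delta t} \to \mathcal{A}$ and of the trajectory not ``jumping over'' the boundary. Your two-sided argument is the precise form of that intuition: approaching from the left gives $C(I,\tau(I)) \in \mathcal{A}$, and approaching from the right gives $C(I,\tau(I)) \in \overline{\mathcal{C}\setminus\mathcal{A}}$.

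Both of your caveats are genuine and absent from the paper.
\begin{enumerate}
\item The Memoryless Regeneration axiom constrains only objects in the range of $R$. A CIPS may therefore contain an $I$ with $C(I,\cdot)$ constant in the exterior of $\mathcal{A}$, giving $\tau(I) = 0$ and $C(I,0) \notin \partial\mathcal{A}$. Your extra hypothesis $C(I,0) \in \mathcal{A}$ is needed.
\item $\tau_{\Delta t}(I)$ need not converge to $\tau(I)$. Suppose $C(I,s) \in \partial\mathcal{A}$ for some $s < \tau(I)$. The certificate at the grid point nearest $s$ lies within $L\Delta t/2$ of $C(I,s)$, hence outside $\mathcal{A}_{\Delta t}$, which forces $\tau_{\Delta t}(I) \le s + \Delta t/2$.
\end{enumerate}
Reading the continuous claim as a statement about $\tau(I)$ at $\Delta t = 0$, with $\tau(I) < \infty$, is therefore the right interpretation.
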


\begin{proof}
  For $\Delta t > 0$, by definition of the infimum defining the first-exit
  time, $C(I, \tau_{\Delta t}(I))$ evaluates outside
  $\mathcal{A}_{\Delta t}$ or on its boundary; if it were in the interior, an open ball
  around the state would exist entirely within the region, contradicting
  the definition of an exit time. As $\Delta t \to 0$, the erosion factor
  $L \cdot \Delta t \to 0$, rendering
  $\mathcal{A}_{\Delta t} \to \mathcal{A}$. Because the certificate mapping
  $t \mapsto C(I,t)$ is $L$-Lipschitz continuous (by
  Definition~\ref{def:cips}), the trajectory is strictly continuous and
  cannot jump over the boundary infinitesimally without intersecting it.
  Since $\mathcal{A}$ is a closed set, this mathematically enforces
  $C(I, \tau(I)) \in \partial \mathcal{A}$. This geometric result links CIPS directly to
  viability theory and barrier certificates.
\end{proof}

\begin{lemma}[Inter-Sample Safety Guarantee]
  \label{lem:intersample}
  Let $C(I,t)$ be $L$-Lipschitz continuous. If $V(I, t_k) = 1$, then
  $C(I, t) \in \mathcal{A}$ for all $t \in [t_k, t_k + \Delta t]$.
\end{lemma}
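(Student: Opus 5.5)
The plan is a direct metric argument that combines the definition of the eroded set $\mathcal{A}_{\Delta t}$ with the Lipschitz bound from Definition~\ref{def:cips}.

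First, unpack the hypothesis. By Definition~\ref{def:validity}, $V(I,t_k)=1$ means $x_k := C(I,t_k) \in \mathcal{A}_{\Delta t}$. By the definition of the Robust Sampled Admissible Region, this gives $x_k \in \mathcal{A}$ and, more importantly, the containment
\[
\mathcal{B}_{L\cdot\Delta t}(x_k) = \{ y \in \mathcal{C} \mid d_{\mathcal{C}}(x_k,y) \le L\,\Delta t \} \subseteq \mathcal{A}.
\]
So it suffices to show that the whole inter-sample segment of the trajectory stays in this closed ball.

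Second, fix an arbitrary $t \in [t_k, t_k+\Delta t]$. Apply the $L$-Lipschitz property of the certificate mapping with $t_1 = t_k$ and $t_2 = t$:
\[
d_{\mathcal{C}}(C(I,t_k), C(I,t)) \le L\,|t - t_k| \le L\,\Delta t .
\]
Hence $C(I,t) \in \mathcal{B}_{L\cdot\Delta t}(x_k) \subseteq \mathcal{A}$. Since $t$ was arbitrary, the claim holds on the entire closed interval. The endpoint $t = t_k+\Delta t$ is covered because the ball in the definition of $\mathcal{A}_{\Delta t}$ is closed, so the non-strict inequality suffices.

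Third, check the degenerate case $\Delta t = 0$. The interval collapses to $\{t_k\}$, and $\mathcal{A}_0 = \mathcal{A}$, so the statement reduces to $x_k \in \mathcal{A}$, which is immediate.

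No step is a genuine obstacle; the whole argument rests on the closed-ball convention in the erosion. The one point to verify carefully is that Lipschitz continuity is assumed along the continuous time axis of the same information object $I$, with no regeneration occurring in $(t_k, t_k+\Delta t]$. This is the zero-order-hold, open-loop interpretation in the statement, so the trajectory $t \mapsto C(I,t)$ is a single unbroken Lipschitz curve on the interval.
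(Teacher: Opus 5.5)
Your proposal is correct and follows the paper's own argument: $V(I,t_k)=1$ gives $C(I,t_k)\in\mathcal{A}_{\Delta t}$, hence $\mathcal{B}_{L\cdot\Delta t}(C(I,t_k))\subseteq\mathcal{A}$, and the Lipschitz bound places every $C(I,t)$ with $t\in[t_k,t_k+\Delta t]$ inside that closed ball. Your added remarks on the case $\Delta t=0$ and on there being no regeneration within the interval are harmless clarifications that the paper leaves implicit.
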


\begin{proof}
  By $L$-Lipschitz continuity, for any $t \in [t_k, t_k + \Delta t]$, the
  maximal state deviation is bounded by the metric distance
  $d_{\mathcal{C}}(C(I, t), C(I, t_k)) \le L(t - t_k) \le L \Delta t$. The condition
  $V(I, t_k) = 1$ implies $C(I, t_k) \in \mathcal{A}_{\Delta t}$. By the strict metric
  definition of $\mathcal{A}_{\Delta t}$, the closed metric ball
  $\mathcal{B}_{L \cdot \Delta t}(C(I, t_k))$ is entirely contained within
  $\mathcal{A}$. Since the metric distance to $C(I, t)$ is at most
  $L \Delta t$, it follows geometrically that
  $C(I, t) \in \mathcal{B}_{L \cdot \Delta t}(C(I, t_k)) \subseteq \mathcal{A}$, mathematically guaranteeing
  the certificate remains inside the ground-truth set throughout the
  inter-sample interval.
\end{proof}

\begin{lemma}[Maximality Limit]
  \label{lem:maximality}
  Let $\mathfrak{C}$ be a CIPS. No operational scheduling policy evaluating
  $C(I,t_k)$ can safely defer the execution of the regeneration operator
  $R$ beyond $t_k = \tau_{\Delta t}(I)$.
\end{lemma}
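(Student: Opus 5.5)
The plan is to argue by contradiction, using Theorem~\ref{thm:boundary} as the certificate of exit and the definition of $\mathcal{A}_{\Delta t}$ as the lens through which a sampled policy perceives safety. First I will make ``safely defer'' precise. A scheduling policy that evaluates only $C(I,t_k)$ is \emph{certifiably safe} if, at every sample instant at which it declines to apply $R$, the sampled information in hand guarantees $C(I,t)\in\mathcal{A}$ over the ensuing blind interval $[t_k,t_k+\Delta t]$. This guarantee must hold uniformly over all trajectories admitted by the $L$-Lipschitz bound of Definition~\ref{def:cips}, since the policy cannot observe $C$ between samples.

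Second, I will fix $I$ with $\tau_{\Delta t}(I)=t_{k^*}<\infty$. If $\tau_{\Delta t}(I)=\infty$, the statement is vacuous. By Theorem~\ref{thm:boundary}, $x^*:=C(I,t_{k^*})\notin\operatorname{int}(\mathcal{A}_{\Delta t})$, and by the definition of the sampled horizon $V(I,t_{k^*})=0$, so $x^*\notin\mathcal{A}_{\Delta t}$. Unfolding the erosion definition, either $x^*\notin\mathcal{A}$, in which case continuing without $R$ is already unsafe, or there is $y\in\mathcal{B}_{L\Delta t}(x^*)\setminus\mathcal{A}$. Suppose the policy defers $R$ past $t_{k^*}$, that is, it does not regenerate at $t_{k^*}$. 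I will then exhibit an admissible continuation. On $[t_{k^*},t_{k^*}+\Delta t]$, move along a path from $x^*$ toward $y$ at speed at most $L$, so that the Lipschitz constraint holds and the path agrees with $C(I,\cdot)$ on $[0,t_{k^*}]$. This continuation leaves $\mathcal{A}$ before the next sample, and the policy, having seen only the samples, cannot distinguish it from the actual trajectory. Deferral therefore cannot be certified safe, which contradicts the assumption and shows that $R$ must be applied no later than $t_k=\tau_{\Delta t}(I)$. Contrasting this with Lemma~\ref{lem:intersample} shows that every $t_k<\tau_{\Delta t}(I)$ is safe to defer, so the bound is tight.

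The main obstacle is the worst-case construction in the second step. In a general metric space, a point $y$ at distance at most $L\Delta t$ need not be reachable by an $L$-Lipschitz curve within time $\Delta t$. Two fixes are available. The first is to assume that $\mathcal{C}$ is a length (geodesic) space, as it is in the platooning example where $\mathcal{C}=\mathbb{R}$. The second is to interpret the $L$-Lipschitz hypothesis as the \emph{only} information available, so that a jump-free but worst-case realization at $y$ at time $t_{k^*}+\Delta t$ is admissible: the two-point map sending $t_{k^*}\mapsto x^*$ and $t_{k^*}+\Delta t\mapsto y$ is itself $L$-Lipschitz. I would take the second route, since it requires no extra structure and matches the uncertainty model underlying $\mathcal{A}_{\Delta t}$.
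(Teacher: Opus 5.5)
Recasting ``safely'' as \emph{certified} safety, meaning safety for every $L$-Lipschitz continuation consistent with the samples, is a genuinely different route from the paper's, and a better one. The paper only notes that $C(I,\tau_{\Delta t}(I))\notin\mathcal{A}_{\Delta t}$ violates the hypothesis of Lemma~\ref{lem:intersample}, and then concludes that ground-truth validity fails. That infers failure of a conclusion from failure of a sufficient condition. Read literally, the lemma is false already for the platoon: a gap that settles at $4$\,m lies in $\mathcal{A}\setminus\mathcal{A}_{\Delta t}$ forever without ever leaving $\mathcal{A}$.

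The gap in your proposal is the route you choose for the adversarial continuation. The two-point assignment $t_{k^*}\mapsto x^*$, $t_{k^*}+\Delta t\mapsto y$ is Lipschitz on its two-point domain. But your own notion of certified safety quantifies over certificate mappings in the sense of Definition~\ref{def:cips}: $L$-Lipschitz, hence continuous, maps on all of $\mathbb{R}_{\ge 0}$. It also requires membership in $\mathcal{A}$ throughout the blind interval. Indistinguishability therefore needs the two-point map to extend to such a path, and that is exactly the length-space property you wanted to avoid. Without it the conclusion fails. Take $\mathcal{C}=[0,1]\cup\{3/2\}\subset\mathbb{R}$, $\mathcal{A}=[0,1]$, $L=\Delta t=1$, $C(I,t)=\min(t,1)$, $R(I)=I$. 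Then $\mathcal{A}_{\Delta t}=[0,\tfrac12)$, the axiom holds because $C(I,0)=0\in\operatorname{int}(\mathcal{A}_{\Delta t})$, and $\tau_{\Delta t}(I)=1$. Your $y=3/2$ lies in $\mathcal{B}_{L\Delta t}(1)\setminus\mathcal{A}$. Yet $\{3/2\}$ is clopen in $\mathcal{C}$, so every continuous continuation from $x^*=1$ stays in $\mathcal{A}$ forever, and deferring $R$ past $t=1$ is certifiably safe.

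So the claim that the second route ``requires no extra structure'' is precisely where the proof breaks. With only the paper's hypotheses, the statement cannot be proved under either reading. Your first fix is the correct one: assume $(\mathcal{C},d_{\mathcal{C}})$ is a length space, which covers $\mathbb{R}$ and $\mathbb{R}^2$ in both case studies.
\begin{itemize}
\item In a geodesic space, follow a geodesic from $x^*$ to $y$ at speed $L$, arriving by time $t_{k^*}+d_{\mathcal{C}}(x^*,y)/L\le t_{k^*}+\Delta t$, then stay at $y$. Concatenated with $C(I,\cdot)$ on $[0,t_{k^*}]$, this is $L$-Lipschitz by the triangle inequality at the junctions.
\item In a length space that is not geodesic, the case $d_{\mathcal{C}}(x^*,y)=L\Delta t$ needs one more step. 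Traverse a path of length at most $L\Delta t+\epsilon$ at speed $L$. At time $t_{k^*}+\Delta t$ it is within $\epsilon$ of $y$, hence outside the open set $\mathcal{A}^c$'s complement, i.e.\ outside $\mathcal{A}$, once $\epsilon$ is small.
\end{itemize}
Alternatively, keep general metric spaces but define the erosion through the set of points reachable by $L$-Lipschitz paths within time $\Delta t$, rather than through the ball $\mathcal{B}_{L\Delta t}$. Your argument then goes through verbatim. Separately, invoking Theorem~\ref{thm:boundary} is unnecessary: for $\Delta t>0$ the infimum over the grid is attained, so $V(I,t_{k^*})=0$ already gives $x^*\notin\mathcal{A}_{\Delta t}$.
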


\begin{proof}
  By definition, $\tau_{\Delta t}(I)$ is the earliest discrete instance where
  sampled validity fails. Deferring regeneration beyond this point
  guarantees the system evaluates to an invalid state
  ($C(I, \tau_{\Delta t}(I)) \notin \mathcal{A}_{\Delta t}$), strictly violating the preconditions
  of Lemma~\ref{lem:intersample}. Consequently, continuous ground-truth
  validity ($C(I,t) \in \mathcal{A}$) fails, establishing
  $\tau_{\Delta t}(I)$ as the absolute temporal bound for any safe policy.
\end{proof}

\subsection{Practical Realization: Surrogate Approximations}
\label{sec:pract-real-surrogate}

In real-world control systems, computing the exact ground-truth
persistence trajectory ($C$) may be analytically intractable. This
subsection bridges the gap between pure theory and computable
implementation. It formally defines conservative metric dominance and
proves via Lemma~\ref{lem:surrogate_safety} that utilizing an easily
computable, bounding surrogate trajectory is strictly safe. This ensures
systems can execute practical, conservative approximations of the
persistence horizon without violating ground-truth physical constraints.

\begin{definition}[Conservative Dominance]
  For any two states $x, y \in \mathcal{C}$, we say that $x$ conservatively
  dominates $y$ relative to $\mathcal{A}$, denoted as
  $x \succeq_{\mathcal{A}} y$, if the metric distance from $x$ to the exterior
  $\mathcal{A}^c$ is less than or equal to that of $y$:
  \begin{equation}
    \operatorname{dist}(x, \mathcal{A}^c) \le \operatorname{dist}(y, \mathcal{A}^c)
  \end{equation}
  where $\operatorname{dist}(z, \mathcal{A}^c) = \inf_{w \notin \mathcal{A}} d_{\mathcal{C}}(z, w)$. Consequently, for any robust sampled region $\mathcal{A}_{\Delta t}$, $x \in \mathcal{A}_{\Delta t} \implies y \in \mathcal{A}_{\Delta t}$.
\end{definition}

\begin{lemma}[Conservative Surrogate Safety Lemma]
  \label{lem:surrogate_safety}
  Let $\mathfrak{C} = (\mathcal{I}, \mathcal{C}, \mathcal{A}, C, R)$ be a CIPS. Let $C_{\text{upper}} : \mathcal{I} \times \mathbb{R}_{\ge 0} \rightarrow \mathcal{C}$ be a computable surrogate mapping that upper-bounds the true certificate evolution relative to the boundary $\partial \mathcal{A}$:
  \begin{equation}
    C_{\text{upper}}(I,t) \succeq_{\mathcal{A}} C(I,t) \quad \forall t \ge 0
  \end{equation}
  Define the surrogate sampled persistence horizon as:
  \begin{equation}
    \tau_{\Delta t, \text{surrogate}}(I) = \inf \{ t_k = k \cdot \Delta t \ge 0 : C_{\text{upper}}(I, t_k) \notin \mathcal{A}_{\Delta t} \}
  \end{equation}
  Then triggering regeneration $R$ at or before
  $t = \tau_{\Delta t, \text{surrogate}}(I)$ strictly guarantees continuous
  ground-truth validity ($C(I,t) \in \mathcal{A}$) for the true state, and
  $\tau_{\Delta t, \text{surrogate}}(I) \le \tau_{\Delta t}(I) \le \tau(I)$.
\end{lemma}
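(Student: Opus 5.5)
The plan is to obtain everything from the implication stated in the definition of conservative dominance: $x \succeq_{\mathcal{A}} y$ and $x \in \mathcal{A}_{\Delta t}$ give $y \in \mathcal{A}_{\Delta t}$. Applied with $x = C_{\text{upper}}(I,t_k)$ and $y = C(I,t_k)$, this yields a sample-wise comparison of the two validity indicators.

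First I would justify that implication explicitly, since the argument rests on it. For $\Delta t > 0$, membership $z \in \mathcal{A}_{\Delta t}$ means $\mathcal{B}_{L\Delta t}(z) \subseteq \mathcal{A}$, which (up to the closed/open ball boundary case) amounts to $\operatorname{dist}(z,\mathcal{A}^c) \ge L\Delta t$. Then $\operatorname{dist}(y,\mathcal{A}^c) \ge \operatorname{dist}(x,\mathcal{A}^c) \ge L\Delta t$. I would handle the boundary case either by working with the distance characterisation directly or by citing the definition as stated.

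\textbf{Step 1 (ordering of horizons).} For every sample index $k$ with $t_k < \tau_{\Delta t,\text{surrogate}}(I)$, minimality of the surrogate infimum gives $C_{\text{upper}}(I,t_k) \in \mathcal{A}_{\Delta t}$. Dominance then gives $C(I,t_k) \in \mathcal{A}_{\Delta t}$, that is, $V(I,t_k)=1$. So no true exit occurs strictly before the surrogate exit, and $\tau_{\Delta t,\text{surrogate}}(I) \le \tau_{\Delta t}(I)$. Because the infimum is over a discrete lattice, it is attained whenever it is finite; if it is infinite, the inequality is trivial.

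\textbf{Step 2 (safety).} Suppose $R$ is triggered at some sample time $t_j \le \tau_{\Delta t,\text{surrogate}}(I)$. On each interval $[t_k, t_k+\Delta t]$ with $k < j$ we have $V(I,t_k)=1$ by Step 1, so Lemma~\ref{lem:intersample} gives $C(I,t) \in \mathcal{A}$ on $[0,t_j]$. At the regeneration instant, the Memoryless Regeneration axiom places $C(R(I),0)$ in $\operatorname{int}(\mathcal{A}_{\Delta t})$, so the certificate restarts inside $\mathcal{A}$ and the argument iterates. This part needs care: to claim continuous validity I will read ``at or before'' as ``at a sample instant no later than'' the surrogate horizon. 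At the horizon sample itself, the sample preceding it is valid and covers the interval up to it, and closedness of $\mathcal{A}$ handles the endpoint.

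\textbf{Step 3 ($\tau_{\Delta t}(I) \le \tau(I)$).} This is where I expect the main obstacle. The claim is that the sampled exit occurs no later than the continuous exit. I would argue by contradiction. If $\tau(I) < \tau_{\Delta t}(I)$, choose $k$ with $t_k \le \tau(I) < t_k+\Delta t$ and $t_k < \tau_{\Delta t}(I)$. Then $V(I,t_k)=1$, so Lemma~\ref{lem:intersample} keeps $C(I,t)$ in $\mathcal{A}$ on $[t_k,t_{k+1}]$; repeating this for later samples keeps $C(I,t)$ in $\mathcal{A}$ up to $\tau_{\Delta t}(I)$, which seems to contradict the definition of $\tau(I)$ as the first exit from $\mathcal{A}$. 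But $C(I,\tau(I)) \in \partial\mathcal{A} \subseteq \mathcal{A}$ by Theorem~\ref{thm:boundary}, and the infimum only forces points outside $\mathcal{A}$ arbitrarily close after $\tau(I)$, so the contradiction has to come from those nearby points.

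The genuine difficulty is the interval $[\tau(I), \tau(I)+\varepsilon)$, where points outside $\mathcal{A}$ accumulate. Lemma~\ref{lem:intersample} excludes these points only up to $t_{k+1}$, so covering them requires $\tau_{\Delta t}(I) > t_{k+1}$, which is not guaranteed. The case $\tau_{\Delta t}(I) = t_{k+1}$ must therefore be treated separately, via the inequality $t_{k+1} \le \tau(I) + \Delta t$. Should this edge case resist, I would state the bound as $\tau_{\Delta t}(I) \le \tau(I) + \Delta t$, or settle the grid case by invoking Lemma~\ref{lem:maximality}.
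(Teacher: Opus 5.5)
Your Steps 1 and 2 are the paper's argument. Dominance transfers sampled validity from $C_{\text{upper}}$ to $C$, which gives $\tau_{\Delta t,\text{surrogate}}(I)\le\tau_{\Delta t}(I)$, and Lemma~\ref{lem:intersample} then turns sampled validity into continuous validity.

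You are more careful than the paper at the exit sample itself. The paper asserts $C(I,t_k)\in\mathcal{A}_{\Delta t}$ for every $t_k\le\tau_{\Delta t,\text{surrogate}}(I)$, but dominance does not deliver this at $t_k=\tau_{\Delta t,\text{surrogate}}(I)$. You instead use only $k<j$ and let $[t_{j-1},t_j]$ reach the endpoint.

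Your hesitation about the closed-ball versus distance form of $\mathcal{A}_{\Delta t}$ is also warranted. For $L\Delta t>0$ the two forms coincide in $\mathbb{R}^n$ (indeed in any length space), which covers the case studies. In a general metric space, however, $\operatorname{dist}(y,\mathcal{A}^c)\ge L\Delta t$ does not exclude a point of $\mathcal{A}^c$ at distance exactly $L\Delta t$ from $y$. So the implication quoted in the dominance definition can fail there, and you should adopt the distance form, as Algorithm~\ref{alg:cips_runtime_engine} does.

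The paper's proof never addresses $\tau_{\Delta t}(I)\le\tau(I)$, so Step 3 is yours alone. The obstacle you flag there is not real: your contradiction already closes on the single interval $[t_k,t_{k+1}]$.
\begin{itemize}
\item You chose $k$ with $t_k\le\tau(I)<t_{k+1}$, so $\varepsilon:=t_{k+1}-\tau(I)>0$.
\item The infimum definition of $\tau(I)$ supplies $s\in[\tau(I),t_{k+1})$ with $C(I,s)\notin\mathcal{A}$, and this $s$ lies in $[t_k,t_{k+1}]$.
\item Since $t_k\le\tau(I)<\tau_{\Delta t}(I)$, we have $V(I,t_k)=1$, so Lemma~\ref{lem:intersample} forces $C(I,s)\in\mathcal{A}$, a contradiction.
\end{itemize}

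The exit points accumulate at $\tau(I)$, which lies strictly inside the covered interval, so nothing past $t_{k+1}$ is needed. The case $\tau_{\Delta t}(I)=t_{k+1}$ therefore requires no separate treatment, and the iteration over later samples is superfluous. The same argument covers $\tau_{\Delta t}(I)=\infty$. It does use $\Delta t>0$: at $\Delta t=0$ the grid collapses to $\{0\}$, and the sampled horizon must be read as the continuous one.

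Drop both fallbacks. The bound $\tau_{\Delta t}(I)\le\tau(I)+\Delta t$ is weaker than what the lemma asserts. Lemma~\ref{lem:maximality} does not state the needed inequality, and it is not safe to lean on. Its proof infers that $C$ leaves $\mathcal{A}$ from $C(I,\tau_{\Delta t}(I))\notin\mathcal{A}_{\Delta t}$, which is the converse of Lemma~\ref{lem:intersample}. That converse fails for a trajectory that enters the buffer $\mathcal{A}\setminus\mathcal{A}_{\Delta t}$ and turns back.
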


\begin{proof}
  By the definition of conservative dominance,
  $C_{\text{upper}}(I, t_k) \succeq_{\mathcal{A}} C(I, t_k)$ strictly implies that if
  $C_{\text{upper}}(I, t_k) \in \mathcal{A}_{\Delta t}$, then
  $C(I, t_k) \in \mathcal{A}_{\Delta t}$. Consequently, the surrogate trajectory must
  exit $\mathcal{A}_{\Delta t}$ at or before the true trajectory, immediately enforcing
  $\tau_{\Delta t, \text{surrogate}}(I) \le \tau_{\Delta t}(I)$. Furthermore, for any
  evaluated time $t_k \le \tau_{\Delta t, \text{surrogate}}(I)$, evaluating the
  surrogate guarantees $C(I, t_k) \in \mathcal{A}_{\Delta t}$, which by
  Lemma~\ref{lem:intersample} mathematically ensures
  $C(I,t) \in \mathcal{A}$ throughout the inter-sample interval.
\end{proof}

\subsection{Representation Theorems and Maximal Certified Persistence}
\label{sec:representation-theorems}

This subsection elevates CIPS from a standalone scheduling tool to a
universal covering theory. Theorem~\ref{thm:universal_safety_bound}
proves that \textit{any} heuristically safe scheduling policy (such as
existing Event/Self-Triggered Control methodologies, etc) is
mathematically isomorphic to a canonical CIPS execution operating on a
conservatively contracted admissible sub-level set. Building on this,
Theorem~\ref{thm:converse_cips} demonstrates dimensionality reduction,
proving that arbitrarily complex multi-dimensional policies can always
be canonically compressed into a mathematically equivalent 1-Dimensional
CIPS using standard signed distance functions.

\begin{theorem}[Maximal Safety Bound and Sub-Level Construction]
  \label{thm:universal_safety_bound}
  Let
  $\mathfrak{C}^* = (\mathcal{I}, \mathcal{C}, \mathcal{A}^*, C^*, R)$ be the canonical ground-truth CIPS
  representing the exact physical or logical dynamics of a system, with
  metric certificate space $(\mathcal{C}, d_{\mathcal{C}})$ and non-empty boundary
  $\partial \mathcal{A}^* \neq \emptyset$, yielding the exact continuous persistence horizon
  $\tau^*(I)$. Let $\Pi_{\text{safe}}$ be the set of any safe, causal,
  memoryless scheduling policies $\pi$ that dictate execution times
  $t_k^\pi$ for the regeneration operator $R$. A policy is strictly safe
  if it acts before the ground-truth system fails (i.e., continuous
  ground-truth validity $C^*(I,t) \in \mathcal{A}^*$ is never violated). Then:
  
  \begin{enumerate}
  \item \textbf{Maximal Safety Bound:} For every safe scheduling policy $\pi \in \Pi_{\text{safe}}$, the scheduled intervention time strictly satisfies:
    \begin{equation}
      t_k^\pi \le \tau^*(I)
    \end{equation}
  \item \textbf{Constructive Equivalence:} Let $c_\pi = C^*(I, t_k^\pi)$ denote the exact ground-truth certificate evaluated at the policy's scheduled intervention time. For every safe policy $\pi \in \Pi_{\text{safe}}$, there exists a uniquely constructed contracted admissible region $\mathcal{A}_\pi \subseteq \mathcal{A}^*$ such that evaluating the canonical certificate $C^*$ against $\mathcal{A}_\pi$ perfectly reproduces the policy's intervention schedule. This region is constructed as the boundary clearance sub-level set:
    \begin{equation}
      \mathcal{A}_\pi := \left\{ x \in \mathcal{A}^* \mid \operatorname{dist}(x, \partial \mathcal{A}^*) \ge \operatorname{dist}(c_\pi, \partial \mathcal{A}^*) \right\}
    \end{equation}
  \end{enumerate}
\end{theorem}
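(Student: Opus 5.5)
The plan is to handle the two parts separately, using only the definitions of the continuous persistence horizon \eqref{eq:1}, the Boundary Exit Theorem (Theorem~\ref{thm:boundary}), and the closedness of $\mathcal{A}^*$.

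\textbf{Part 1: the maximal safety bound.} I would argue by contradiction. Suppose some safe policy $\pi$ schedules an intervention at $t_k^\pi > \tau^*(I)$, with no earlier regeneration on the current information object $I$. Then $C^*(I,\cdot)$ evolves unmitigated on $[0,t_k^\pi]$.

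By Theorem~\ref{thm:boundary} in the continuous limit, $C^*(I,\tau^*(I)) \in \partial\mathcal{A}^*$. By the infimum in \eqref{eq:1}, every interval $(\tau^*(I), \tau^*(I)+\varepsilon)$ contains a time $t$ with $C^*(I,t) \notin \mathcal{A}^*$. Choosing $\varepsilon < t_k^\pi - \tau^*(I)$ places such a $t$ strictly before the intervention. This contradicts the safety requirement that $C^*(I,t) \in \mathcal{A}^*$ for all $t$ preceding intervention.

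If $\tau^*(I) = \infty$, the bound holds trivially. The case $t_k^\pi = \tau^*(I)$ is permitted by the non-strict inequality, since $\mathcal{A}^*$ is closed and the boundary point is admissible.

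\textbf{Part 2: constructive equivalence.} Define $\phi(x) := \operatorname{dist}(x,\partial\mathcal{A}^*)$, which is $1$-Lipschitz on $\mathcal{C}$. Then $\mathcal{A}_\pi = \mathcal{A}^* \cap \phi^{-1}([\phi(c_\pi),\infty))$ is an intersection of closed sets, so it is closed, and $\mathcal{A}_\pi \subseteq \mathcal{A}^*$ by construction. It is non-empty because it contains $c_\pi$: safety of $\pi$ gives $c_\pi \in \mathcal{A}^*$ by closedness, and $\phi(c_\pi) \ge \phi(c_\pi)$ trivially. Uniqueness is immediate, since the set is fixed by the explicit formula once $c_\pi$ is fixed.

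For reproduction of the schedule, I would consider the composite $g(t) := \phi(C^*(I,t))$, which is $L$-Lipschitz, and the exit time
\begin{equation}
\tau_\pi(I) := \inf\{t \ge 0 : C^*(I,t) \notin \mathcal{A}_\pi\}.
\end{equation}
The goal is to show $\tau_\pi(I) = t_k^\pi$. Two inclusions are needed.
\begin{itemize}
\item[(a)] At $t_k^\pi$ the trajectory lies on the level set $\{\phi = \phi(c_\pi)\}$, so $C^*(I,t_k^\pi)$ is not in the interior relative to the sub-level family. This gives a boundary-exit analogue of Theorem~\ref{thm:boundary}.
\item[(b)] For $t < t_k^\pi$ the trajectory must stay in $\mathcal{A}_\pi$, i.e.\ $g(t) \ge g(t_k^\pi)$.
\end{itemize}

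\textbf{Main obstacle.} Step (b) is where I expect the difficulty: $g$ need not be monotone, so the clearance could dip below $\phi(c_\pi)$ earlier and then recover. My first attempt would be to interpret ``reproduces the schedule'' via the first-hitting time of the clearance level $\phi(c_\pi)$. That is, I would redefine $t_k^\pi$ canonically as $\inf\{t : g(t) \le \phi(c_\pi)\}$, using causality and memorylessness of $\pi$ to argue the policy's decision depends only on the current certificate clearance. Continuity of $g$ (from Lipschitzness) then gives a well-defined hitting time equal to the policy's time. Failing that, I would replace the threshold by $\min_{s \le t_k^\pi} g(s)$, which is attained by continuity on a compact interval, so that the first exit coincides with $t_k^\pi$ whenever $g$ is non-increasing up to $t_k^\pi$. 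I would state that monotonicity explicitly as the regime in which the isomorphism is exact.
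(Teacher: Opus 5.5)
Your Part 1 is the paper's argument: assume $t_k^\pi>\tau^*(I)$, use the infimum in \eqref{eq:1} to find an invalid time before the intervention, and contradict safety. The appeal to Theorem~\ref{thm:boundary} is not needed. In Part 2 your preliminary steps also match the paper's: $\phi$ is $1$-Lipschitz, $\mathcal{A}_\pi$ is closed, and $c_\pi\in\mathcal{A}_\pi\subseteq\mathcal{A}^*$. The paper then simply asserts that the first exit of $C^*(I,\cdot)$ from $\mathcal{A}_\pi$ equals $t_k^\pi$. You are right that this is where the difficulty lies, but your proposal does not close it, and two of your steps are wrong.

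\textbf{Step (a) and the monotone regime.} Step (a) runs Theorem~\ref{thm:boundary} backwards. Since $c_\pi\in\mathcal{A}_\pi$, the time $t_k^\pi$ is not itself an exit time. Getting $\tau_\pi(I)\le t_k^\pi$ requires every interval $(t_k^\pi,t_k^\pi+\varepsilon)$ to contain a time with $C^*(I,t)\notin\mathcal{A}_\pi$. Lying on the level set $\{\phi=\phi(c_\pi)\}$ gives nothing of the sort. Hence your claim that the exit coincides with $t_k^\pi$ whenever $g$ is non-increasing up to $t_k^\pi$ is false. Take $\mathcal{C}=\mathbb{R}$, $\mathcal{A}^*=[3,\infty)$, $C^*(I,t)\equiv 5$, and a safe periodic policy with $t_k^\pi=1$. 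Then $\mathcal{A}_\pi=[5,\infty)$ and $\tau_\pi(I)=\infty$. In fact a constant trajectory has exit time $0$ or $\infty$ from \emph{every} fixed set, so no static region of any shape reproduces this schedule. Your worry in (b) is also realised: $C^*(I,t)=4+|t-1|$ with $t_k^\pi=2$ gives $\mathcal{A}_\pi=[5,\infty)$ and $\tau_\pi(I)=0$.

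\textbf{The first fallback.} Redefining $t_k^\pi$ as the hitting time of the level $\phi(c_\pi)$ replaces the given policy by a clearance-threshold policy, so it assumes the conclusion. Causality and memorylessness do not force the decision to depend only on $\phi(C^*(I,t))$. A policy may depend on elapsed time, on other coordinates of the certificate, or on $I$ itself. Periodic execution, which Section~\ref{sec:exper-setup-basel} explicitly treats as covered by this theorem, is exactly such a policy.

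\textbf{What can be proved.} Given safety, $\tau_\pi(I)=t_k^\pi$ holds if and only if both of the following hold:
\begin{itemize}
\item $g(t)\ge g(t_k^\pi)$ for all $t\in[0,t_k^\pi]$;
\item $C^*(I,\cdot)$ leaves $\mathcal{A}_\pi$ in every right-neighbourhood of $t_k^\pi$.
\end{itemize}
Without these conditions you only get a sandwich. With the running-minimum threshold $m=\min_{[0,t_k^\pi]}g$, the first exit from $\{x\in\mathcal{A}^*:\phi(x)\ge m\}$ lies in $[t_k^\pi,\tau^*(I)]$. So part (2) as stated is false in general, and the paper's one-line assertion does not fill the hole. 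A correct write-up must state these conditions as an explicit hypothesis, not derive them from a re-reading of $t_k^\pi$.
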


\begin{proof}
  \textit{(1. Universal Safety Bound)} Assume for contradiction that
  there exists a safe policy $\pi \in \Pi_{\text{safe}}$ that schedules an
  intervention at elapsed time $t_k^\pi > \tau^*(I)$. By the definition of
  the exact persistence horizon $\tau^*(I)$ (Equation~\eqref{eq:1}), there
  exists some time $t \in (\tau^*(I), t_k^\pi]$ such that
  $C^*(I, t) \notin \mathcal{A}^*$. This strictly violates continuous ground-truth
  validity prior to the regeneration intervention, directly
  contradicting the premise that $\pi \in \Pi_{\text{safe}}$. Hence,
  $t_k^\pi \le \tau^*(I)$ must structurally hold for all safe policies.

  \textit{(2. Conservative Sub-level Representation)} Because
  $(\mathcal{C}, d_{\mathcal{C}})$ is a metric space and
  $\partial \mathcal{A}^* \neq \emptyset$, the distance mapping
  $x \mapsto \operatorname{dist}(x, \partial \mathcal{A}^*) = \inf_{y \in \partial \mathcal{A}^*} d_{\mathcal{C}}(x, y)$ is
  well-defined and 1-Lipschitz continuous. Because
  $t_k^\pi \le \tau^*(I)$, it is guaranteed that
  $C^*(I, t_k^\pi) \in \mathcal{A}^*$. This mathematical necessity ensures
  $\mathcal{A}_\pi$ is a non-empty, closed metric subset of the ground-truth region
  $\mathcal{A}^*$. The exact first-exit time of the true state
  $C^*(I, t)$ from this contracted set $\mathcal{A}_\pi$ evaluates precisely to
  $t_k^\pi$. Thus, any arbitrary safe policy $\pi$ behaves as a strict
  structural isomorphic evaluation of the canonical CIPS operating under
  the conservative threshold $\mathcal{A}_\pi$.
\end{proof}

\begin{corollary}[Maximal Certified Persistence]
  \label{cor:maximal_persistence}
  Let $\mathfrak{C}^*$ denote the canonical CIPS of
  Theorem~\ref{thm:universal_safety_bound} with persistence horizon
  $\tau^*(I)$. Then $\tau^*(I)$ is the unique maximal certified persistence
  horizon among all safe, causal, memoryless scheduling policies.
  Specifically, for every $\pi \in \Pi_{\text{safe}}$,
  \begin{equation}
    t_k^\pi \le \tau^*(I)
  \end{equation}
  Consequently, no safe scheduling policy can exploit a longer certified
  lifetime of the information object than that provided by the canonical
  CIPS\@.
\end{corollary}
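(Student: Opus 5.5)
The corollary follows from part~1 of Theorem~\ref{thm:universal_safety_bound}, together with a short argument that $\tau^*(I)$ is itself attained, so that it is the maximum and not just an upper bound.

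\textbf{Upper bound.} Fix $\pi \in \Pi_{\text{safe}}$ and its scheduled intervention time $t_k^\pi$. Part~1 of Theorem~\ref{thm:universal_safety_bound} gives $t_k^\pi \le \tau^*(I)$ directly. The argument is by contradiction: if $t_k^\pi > \tau^*(I)$, the infimum in~\eqref{eq:1} yields a time in $(\tau^*(I), t_k^\pi]$ at which $C^*(I,t) \notin \mathcal{A}^*$, and this contradicts safety. So the displayed inequality needs no new work. Consequently $\sup_{\pi \in \Pi_{\text{safe}}} t_k^\pi \le \tau^*(I)$.

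\textbf{Attainment.} I would show that the bound is achieved by exhibiting the canonical policy $\pi^*$ that fires $R$ exactly at $\tau^*(I)$, that is, at the first exit of $C^*(I,\cdot)$ from $\mathcal{A}^*$. I would check three properties in turn.

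\begin{itemize}
\item[\emph{Causality.}] The trigger depends only on the current certificate value, namely whether it has reached $\partial\mathcal{A}^*$.
\item[\emph{Memorylessness.}] This follows from the Memoryless Regeneration axiom.
\item[\emph{Safety.}] For $t < \tau^*(I)$ we have $C^*(I,t) \in \mathcal{A}^*$ by the definition of the infimum. At $t = \tau^*(I)$, Theorem~\ref{thm:boundary} (continuous case) together with closedness of $\mathcal{A}^*$ gives $C^*(I,\tau^*(I)) \in \partial\mathcal{A}^* \subseteq \mathcal{A}^*$. After the reset, the axiom puts the certificate back in $\operatorname{int}(\mathcal{A}_{\Delta t}) \subseteq \mathcal{A}^*$.
\end{itemize}

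Hence $\pi^* \in \Pi_{\text{safe}}$ and $t_k^{\pi^*} = \tau^*(I)$. In the language of part~2 of the theorem, $c_{\pi^*}$ lies on $\partial\mathcal{A}^*$, so $\mathcal{A}_{\pi^*} = \mathcal{A}^*$.

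\textbf{Uniqueness.} Any other candidate horizon $\tau'$ with the same maximality property must satisfy $\tau' \le \tau^*(I)$ by the upper bound, because $\tau'$ is realized by some safe policy. It must also satisfy $\tau^*(I) \le \tau'$, because $\pi^*$ is safe. Therefore $\tau' = \tau^*(I)$. The final sentence of the corollary follows, since a longer certified lifetime would be a safe intervention time exceeding $\tau^*(I)$.

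\textbf{Main obstacle.} The delicate step is the safety of $\pi^*$ at the boundary instant, together with the case $\tau^*(I) = \infty$.

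\begin{itemize}
\item[\emph{Boundary instant.}] I would handle this with closedness of $\mathcal{A}^*$ and Lipschitz continuity, using Theorem~\ref{thm:boundary}. The reset is instantaneous in the continuous model, so I would interpret $\pi^*$ in the ideal limit $\Delta t \to 0$, $L_{\text{reg}}=0$. For positive latency, only the supremum statement survives, approached by policies firing at $\tau^*(I) - \epsilon$.
\item[\emph{Infinite horizon.}] When $\tau^*(I) = \infty$, the bound is vacuous and maximality holds trivially, since every finite schedule is safe.
\end{itemize}
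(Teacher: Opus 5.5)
Your upper-bound step is the paper's entire proof. The paper cites part~1 of Theorem~\ref{thm:universal_safety_bound} and reads ``maximal'' off the inequality $t_k^\pi \le \tau^*(I)$, adding only the caveat that maximality is relative to the global Lipschitz bound. Your attainment and uniqueness steps go beyond it, and usefully so: the inequality alone does not show that any safe policy reaches, or even approaches, $\tau^*(I)$, so it does not justify the word ``maximal.'' Exhibiting a safe $\pi^*$ with $t_k^{\pi^*} = \tau^*(I)$ makes the bound tight in the idealized continuous, zero-latency model. As you note, this is equivalent to $\mathcal{A}_{\pi^*} = \mathcal{A}^*$ in part~2. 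Uniqueness is then immediate, since a set of reals has at most one maximum.

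Two parts of this added material need repair.

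\emph{Causality.} The trigger you describe is the wrong one. Firing when $C^*(I,t)$ first reaches $\partial\mathcal{A}^*$ gives the first boundary contact, which can come strictly before the first exit. For example, take $\mathcal{A}^* = [0,\infty)$ and $C^*(I,t) = |1-t|$ on $[0,2]$, continued as $3-t$. The contact is at $t=1$, but $\tau^*(I) = 3$. Moreover, no rule based only on current and past certificate values can fire exactly at the first exit. Here the history on $[0,1]$ is identical to that of $1-t$, whose exit time is $1$. The correct justification uses determinism: $C^*(I,\cdot)$ is fixed by $I$, so $\tau^*(I)$ can be computed when $I$ is issued and $R$ scheduled at that time, self-triggered style. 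That rule depends only on the current object $I$. This is also the right reason $\pi^*$ is memoryless; the Memoryless Regeneration axiom constrains $R$, not the scheduler.

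\emph{Latency.} Your remark about positive latency is incorrect under either reading of $t_k^\pi$.
\begin{itemize}
\item If $t_k^\pi$ is the time $R$ takes effect, initiating at $\tau^*(I) - L_{\text{reg}}$ attains $\tau^*(I)$ exactly, so no $\epsilon$ is needed.
\item If $t_k^\pi$ is the initiation time, safety forces $t_k^\pi \le \tau^*(I) - L_{\text{reg}}$, as in the constraint of Theorem~\ref{thm:optimal_policy}. A policy initiating at $\tau^*(I) - \epsilon$ with $\epsilon < L_{\text{reg}}$ completes after $\tau^*(I)$. Since $\tau^*(I)$ is an infimum of exit times, exits occur arbitrarily soon after it, so that policy is unsafe.
\end{itemize}
In neither case is $\tau^*(I)$ merely approached by safe policies firing at $\tau^*(I) - \epsilon$.
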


\begin{proof}
  The result follows immediately from the Universal Safety Bound
  property established in Theorem~\ref{thm:universal_safety_bound}.
  Since every safe policy satisfies $t_k^\pi \le \tau^*(I)$, no safe policy may
  schedule regeneration after the canonical persistence horizon without
  violating ground-truth validity. Therefore, $\tau^*(I)$ is the maximal
  certified persistence horizon. It is important to note that this
  maximality is strictly conditional on the tightness of the bounding
  assumptions (e.g., the global Lipschitz constant $L$). A less
  conservative local bound $L(x)$ would yield a correspondingly longer
  canonical horizon.
\end{proof}

\begin{remark}[The Boundary of Certification]
  The canonical CIPS is not merely one policy among many. It acts as the absolute boundary separating:
  \begin{itemize}
      \item all mathematically certifiable schedules, and
      \item all schedules that necessarily violate certification.
  \end{itemize}
  This dichotomy grants the canonical persistence horizon $\tau^*$ a
  special theoretical status within the framework, serving as the
  fundamental limit of safe autonomous operation.
\end{remark}

\begin{theorem}[Converse CIPS Dimensionality Reduction]
\label{thm:converse_cips}
Let $(\mathcal{M}, d_{\mathcal{M}})$ be an arbitrary metric state space, and let $\mathcal{S} \subset \mathcal{M}$ be a non-empty closed admissible set with non-empty interior $\operatorname{int}(\mathcal{S}) \neq \emptyset$. Let $\mathcal{I}$ be a set of information objects, where each object $I \in \mathcal{I}$ evolves continuously in the native state space via a trajectory $x_I: [0, \infty) \rightarrow \mathcal{M}$ with initial condition $x_I(0) \in \operatorname{int}(\mathcal{S})$.

Suppose $\pi$ is any object-specific scheduling policy whose intervention time $t_k^\pi(I)$ is dictated by the first exit of the state $x_I(t)$ from $\mathcal{S}$. Then the entire system and policy $\pi$ are exactly generated by a canonical 1-Dimensional CIPS tuple $\mathfrak{C}_\pi = (\mathcal{I}, \mathbb{R}, [0, \infty), C_\pi, R)$, where the certificate mapping $C_\pi$ compresses the native state space into the classical signed distance function $\operatorname{sdist}(x_I(t), \mathcal{S})$:
\begin{equation}
C_\pi(I, t) = \operatorname{sdist}(x_I(t), \mathcal{S}) := \begin{cases} 
+\operatorname{dist}(x_I(t), \mathcal{S}^c), & x_I(t) \in \mathcal{S} \\
-\operatorname{dist}(x_I(t), \mathcal{S}), & x_I(t) \notin \mathcal{S}
\end{cases}
\end{equation}
The continuous persistence horizon of this 1D CIPS perfectly evaluates the original policy schedule: $\tau(I) = t_k^\pi(I)$.
\end{theorem}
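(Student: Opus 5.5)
The plan is to verify the two claims the statement packages together. First, $\mathfrak{C}_\pi$ is a legitimate CIPS in the sense of Definition~\ref{def:cips}. Second, its continuous horizon \eqref{eq:1} coincides with the first-exit time $t_k^\pi(I)=\inf\{t\ge 0: x_I(t)\notin\mathcal{S}\}$.

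For the tuple conditions I would take the certificate space $\mathbb{R}$ with $d(a,b)=|a-b|$. The admissible region $[0,\infty)$ is non-empty and closed. The only substantive condition is the Lipschitz property of $C_\pi$, and I would reduce it to two facts.
\begin{itemize}
\item The signed distance $x\mapsto \operatorname{sdist}(x,\mathcal{S})$ is $1$-Lipschitz on $\mathcal{M}$. When $x,y$ lie on the same side of $\mathcal{S}$, this is the standard $1$-Lipschitz property of $\operatorname{dist}(\cdot,\mathcal{S}^c)$ or $\operatorname{dist}(\cdot,\mathcal{S})$.
\item When $x\in\mathcal{S}$ and $y\notin\mathcal{S}$, we have $\operatorname{dist}(x,\mathcal{S}^c)\le d(x,y)$ because $y\in\mathcal{S}^c$. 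In a general metric space, however, $\operatorname{dist}(x,\mathcal{S}^c)+\operatorname{dist}(y,\mathcal{S})\le d(x,y)$ need not hold, since there may be no geodesic path crossing the boundary. I expect this to be the main obstacle. I would resolve it by assuming $\mathcal{M}$ is a length space. Failing that, I would replace sdist by any $1$-Lipschitz function with the same sign pattern, for example $\operatorname{dist}(x,\mathcal{S}^c)-\operatorname{dist}(x,\mathcal{S})$, which is $2$-Lipschitz and suffices.
\end{itemize}

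Composing with the trajectory gives the time-Lipschitz bound. If each $x_I$ is $L$-Lipschitz in time, then $|C_\pi(I,t_1)-C_\pi(I,t_2)|\le L|t_1-t_2|$. Mere continuity of $x_I$ only yields continuity of $C_\pi$, so I would make the hypothesis that $x_I$ is $L$-Lipschitz explicit, as it is implicit in the CIPS setting. The regeneration axiom holds by taking $R$ to reset $x_I$ to an interior point whose clearance exceeds $L\Delta t$. This is possible since $\operatorname{int}(\mathcal{S})\neq\emptyset$, at least for small $\Delta t$, or in the continuous case $\Delta t=0$.

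For the horizon identity the key observation is the sign equivalence
\begin{equation*}
C_\pi(I,t)\in[0,\infty)\iff x_I(t)\in\mathcal{S}.
\end{equation*}
For $x\in\mathcal{S}$ the value $\operatorname{dist}(x,\mathcal{S}^c)\ge0$, taking the convention $+\infty$ when $\mathcal{S}^c=\emptyset$. For $x\notin\mathcal{S}$, closedness of $\mathcal{S}$ gives $\operatorname{dist}(x,\mathcal{S})>0$, so the value is strictly negative. Therefore
\begin{equation*}
\{t: C_\pi(I,t)\notin[0,\infty)\}=\{t: x_I(t)\notin\mathcal{S}\},
\end{equation*}
and taking infima yields $\tau(I)=t_k^\pi(I)$ directly from \eqref{eq:1}, including the case where both are $+\infty$. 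Finally, I would remark that by Theorem~\ref{thm:boundary} the exit occurs at $C_\pi=0$, that is, on $\partial\mathcal{S}$. This is consistent with the policy's exit semantics and completes the representation.
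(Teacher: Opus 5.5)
Your argument for $\tau(I)=t_k^\pi(I)$ is exactly the paper's. The paper lists the sign properties of $\operatorname{sdist}$ (positive on $\operatorname{int}(\mathcal{S})$, zero on $\partial\mathcal{S}$, negative off $\mathcal{S}$), identifies $\{t: C_\pi(I,t)<0\}$ with $\{t: x_I(t)\notin\mathcal{S}\}$, and takes infima. It uses $x_I(0)\in\operatorname{int}(\mathcal{S})$ only to get $C_\pi(I,0)>0$ and hence $\tau(I)>0$.

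What you add is a check that $\mathfrak{C}_\pi$ actually satisfies Definition~\ref{def:cips} and the regeneration axiom. The paper's proof never does this; it only asserts continuity of $t\mapsto\operatorname{sdist}(x_I(t),\mathcal{S})$. Your diagnosis is correct and adds real rigor. Merely continuous $x_I$ give no growth bound, so a hypothesis such as $L$-Lipschitz trajectories is needed. Also, $\operatorname{sdist}$ is $1$-Lipschitz on length spaces but not in general: take $\mathcal{M}=[0,1]\cup[2,3]$, $\mathcal{S}=[0,1]$ and the points $1,2$.

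Two corrections. First, your fallback $\operatorname{dist}(x,\mathcal{S}^c)-\operatorname{dist}(x,\mathcal{S})$ is not a different function. One of the two terms always vanishes, so it \emph{is} $\operatorname{sdist}$. The correct fallback is that $\operatorname{sdist}$ itself is $2$-Lipschitz on every metric space: for $x\in\mathcal{S}$ and $y\notin\mathcal{S}$, both $\operatorname{dist}(x,\mathcal{S}^c)$ and $\operatorname{dist}(y,\mathcal{S})$ are at most $d_{\mathcal{M}}(x,y)$. So either take growth bound $2L$, or use $\tfrac12\operatorname{sdist}$, which has the same sign pattern and therefore the same horizon.

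Second, the axiom needs no construction at $\Delta t=0$. There $\operatorname{int}(\mathcal{A}_0)=(0,\infty)$, and the hypothesis $x_I(0)\in\operatorname{int}(\mathcal{S})$ already gives $C_\pi(I,0)>0$ for every $I$, so any $R:\mathcal{I}\to\mathcal{I}$ qualifies. For $\Delta t>0$, note that $R$ acts on $\mathcal{I}$ rather than on $\mathcal{M}$. You therefore need some object in $\mathcal{I}$ whose initial clearance exceeds $L\Delta t$, not just an interior point of $\mathcal{S}$ with that clearance.
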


\begin{proof}
Since $\mathcal{S}$ is closed and $x_I(t)$ is continuous, the scalar signed distance function $t \mapsto \operatorname{sdist}(x_I(t), \mathcal{S})$ is continuous. By standard properties of signed distance functions on metric spaces:
\begin{enumerate}
    \item $\operatorname{sdist}(x_I(t), \mathcal{S}) > 0 \iff x_I(t) \in \operatorname{int}(\mathcal{S})$,
    \item $\operatorname{sdist}(x_I(t), \mathcal{S}) = 0 \iff x_I(t) \in \partial \mathcal{S}$,
    \item $\operatorname{sdist}(x_I(t), \mathcal{S}) < 0 \iff x_I(t) \in \operatorname{ext}(\mathcal{S}) = \mathcal{S}^c$.
\end{enumerate}
Under the canonical 1D admissible region $\mathcal{A} = [0, \infty)$, the first-exit persistence horizon evaluates precisely to the policy intervention time for that object:
\begin{equation}
  \tau(I) = \inf \{ t \ge 0 : \operatorname{sdist}(x_I(t), \mathcal{S}) < 0 \} = \inf \{ t \ge 0 : x_I(t) \notin \mathcal{S} \} = t_k^\pi(I)
\end{equation}
Because $x_I(0) \in \operatorname{int}(\mathcal{S})$, we have $C_\pi(I, 0) > 0$, ensuring $\tau(I) > 0$ is strictly non-degenerate. Thus, the 1-Dimensional CIPS $\mathfrak{C}_\pi$ canonically represents the native first-exit policy $\pi$ on a per-trajectory basis.
\end{proof}

\subsection{Operational Optimality \& Scheduling}
\label{sec:oper-optim-sched}

Having established how to compute and bound the exact valid duration of states, this subsection dictates exactly \textit{when} a system should act. Theorem~\ref{thm:optimal_policy} leverages the computed persistence horizon to prove that triggering the regeneration operator exactly at the required latency lead-time maximizes operational utility while preserving safety. Additionally, Theorem~\ref{thm:minimal_sequence} demonstrates that chaining these single-step optimal choices sequentially minimizes the total number of resource-heavy interventions across an entire operational timeline. 

\begin{theorem}[Optimal Single-Step Regeneration Policy with Latency]
\label{thm:optimal_policy}
Let $\mathfrak{C}$ be a deterministic CIPS under the following strict operational constraints:
\begin{enumerate}
  \item The certificate evolution $C(I,t)$ contains absolutely no stochastic uncertainty.
    \item Application of $R$ incurs a uniform, constant, strictly positive cost $K > 0$.
    \item Application of $R$ requires a deterministic temporal execution latency $L_{\text{reg}} > 0$.
    \item Operational utility scales linearly with time $t$ between
      regenerations.
    \item $L_{\text{reg}} \le \tau_{\Delta t}(I)$.
\end{enumerate}
Under these constraints, the optimal single-step policy maintaining continuous validity ($V=1$) while maximizing utility is to \textbf{initiate} $R$ strictly at lead time:
\begin{equation}
t_{\text{trigger}} = \tau_{\Delta t}(I) - L_{\text{reg}}
\end{equation}
\end{theorem}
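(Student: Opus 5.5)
I would set this up as a one-dimensional optimization over the trigger time $t \ge 0$. Constraint 5 guarantees that the candidate $t_{\text{trigger}} = \tau_{\Delta t}(I) - L_{\text{reg}}$ is nonnegative, so it is admissible. The argument then has two parts: first identify the feasible set of trigger times that preserve $V=1$ until the regenerated object takes over, and then show that the utility-minus-cost objective is maximized at the right endpoint of that set.

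\textbf{Feasibility.} If $R$ is initiated at $t$, the regenerated certificate $C(R(I),0)$ only becomes active at $t + L_{\text{reg}}$. Until then, the system still relies on the old object $I$. By the Maximality Limit (Lemma~\ref{lem:maximality}), safety requires $t + L_{\text{reg}} \le \tau_{\Delta t}(I)$. Conversely, if $t + L_{\text{reg}} \le \tau_{\Delta t}(I)$, then every sample $t_k < \tau_{\Delta t}(I)$ satisfies $V(I,t_k)=1$ by the definition of the first-exit time. Lemma~\ref{lem:intersample} then gives $C(I,t)\in\mathcal{A}$ on every inter-sample interval up to the handover. At the handover, the Memoryless Regeneration axiom places $C(R(I),0)$ in $\operatorname{int}(\mathcal{A}_{\Delta t})$, so validity continues. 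The feasible set is therefore exactly $[0, \tau_{\Delta t}(I) - L_{\text{reg}}]$.

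One subtlety needs care. The handover time lies on the sampling grid only if $L_{\text{reg}}$ is compatible with $\Delta t$. Otherwise I would interpret ``at $\tau_{\Delta t}(I)$'' through the last valid sample, as Lemma~\ref{lem:maximality} implicitly does, and note that the trigger is reached by a ZOH-consistent evaluation.

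\textbf{Optimality.} By constraints 2 and 4, the single-step objective has the form $J(t) = \alpha\,(t + L_{\text{reg}}) - K$, or $\alpha t - K$, with $\alpha>0$. Either way it is strictly increasing in $t$, and the cost $K$ is independent of when $R$ is applied. By determinism (constraint 1), $\tau_{\Delta t}(I)$ is known exactly, so no safety margin is needed. A strictly increasing function on a closed interval attains its unique maximum at the right endpoint, which is $t_{\text{trigger}}$. Any later trigger is infeasible by the feasibility step, and any earlier trigger loses utility $\alpha(t_{\text{trigger}}-t)>0$.

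\textbf{Main obstacle.} The hardest part is the feasibility direction showing that triggering later is unsafe. This depends on the claim in Lemma~\ref{lem:maximality} that sampled invalidity forces a ground-truth violation, but sampled invalidity only voids the guarantee. I would therefore phrase optimality relative to the certified policies: those that keep $V=1$ at every evaluated sample, which is the stated requirement ``maintaining continuous validity ($V=1$)''. Under that reading the argument is exact and does not need the stronger ground-truth claim.
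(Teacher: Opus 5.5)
Your proposal is correct and follows essentially the same route as the paper: you derive the constraint $t + L_{\text{reg}} \le \tau_{\Delta t}(I)$ from the requirement that renewal take effect by the first-exit time, and then maximize a strictly increasing objective over that interval (the paper's objective is $T_{\text{op}} = t_r$). Your additions are refinements the paper's proof leaves implicit, not a different method: nonnegativity of the trigger from constraint 5, the sufficiency direction via the Inter-Sample Safety lemma and the Memoryless Regeneration axiom, and basing necessity on keeping $V=1$ at the sample $\tau_{\Delta t}(I)$ rather than on the Maximality Limit's unjustified ground-truth claim.
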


\begin{proof}
Let $t_r$ be the initiation time of the regeneration operator $R$. Because $R$ requires latency $L_{\text{reg}} > 0$ to complete computation/execution, the renewed state takes effect at time $t_{\text{renew}} = t_r + L_{\text{reg}}$. To guarantee continuous ground-truth validity ($V=1$) without state invalidity, renewal must occur on or before the first-exit time: $t_r + L_{\text{reg}} \le \tau_{\Delta t}(I)$, which imposes the safety constraint $t_r \le \tau_{\Delta t}(I) - L_{\text{reg}}$. 

Under uniform execution cost $K > 0$ and linear utility scaling,
maximizing net utility over a single cycle reduces to maximizing the
operational duration prior to initiating regeneration,
$T_{\text{op}} = t_r$. Solving the constrained optimization problem
$\max (t_r)$ subject to $t_r \le \tau_{\Delta t}(I) - L_{\text{reg}}$ yields a
unique maximizer at $t_r^* = \tau_{\Delta t}(I) - L_{\text{reg}}$.
\end{proof}

\begin{remark}[Relationship Between Cost and Latency]
\label{rem:cost_latency}
In computational and physical implementations (e.g., numerical optimization solvers or network queries), execution latency $L_{\text{reg}}$ is often the primary driver of resource consumption. Without loss of generality, operational cost may be modeled as proportional to latency, $K = c \cdot L_{\text{reg}}$ for unit cost rate $c > 0$, or $K = L_{\text{reg}}$ under non-dimensionalized system units.
\end{remark}

\begin{definition}[High-Probability Envelopes]
\label{def:approach_b}
Let $C_{\text{stoch}}(I,t,\omega)$ be a stochastic process (e.g., Brownian
motion with drift, sub-Gaussian error growth) representing certificate
evolution where the exact trajectory cannot be continuously observed in
advance. We construct a deterministic confidence tube
$C_{\text{upper}}(I,t)$ leveraging Lemma~\ref{lem:surrogate_safety},
such that for a given risk tolerance $\epsilon \in (0, 1)$:
$$ \mathbb{P}\left(C_{\text{stoch}}(I,t,\omega) \le C_{\text{upper}}(I,t)\right) \ge 1 - \epsilon $$
\end{definition}

\begin{remark}[Chance-Constrained Surrogate Policy]
\label{rem:chance_constrained}
Because the bounding envelope $C_{\text{upper}}(I,t)$ constructed in
Definition~\ref{def:approach_b} is deterministic, its corresponding
surrogate horizon $\tau_{\Delta t, \text{surrogate}}(I)$ evaluates
deterministically. Therefore, initiating the regeneration operator $R$
at $t = \tau_{\Delta t, \text{surrogate}}(I) - L_{\text{reg}}$ serves as the
optimal single-step policy that mathematically guarantees
$(1-\epsilon)$-chance-constrained continuous
safety~\cite{oguri2019convex}.
\end{remark}

\begin{theorem}[Minimal Regeneration Sequence with Latency]
\label{thm:minimal_sequence}
Let $\mathfrak{C}$ be a CIPS operating under the conditions of Theorem~\ref{thm:optimal_policy}. For any fixed operational timeline $[0, T]$, let $\{t_k\}$ denote a valid sequence of initiation times for $R$. Let $I_k$ denote the information object generated at step $k$, where $I_0 \in \mathcal{I}$ is the initial information object and $I_k = R(I_{k-1})$ for all $k \ge 1$. The sequence of initiation times defined strictly by:
\begin{equation}
t_1 = \tau_{\Delta t}(I_0) - L_{\text{reg}}, \quad \text{and} \quad t_{k+1} = t_k + \tau_{\Delta t}(I_k) \quad \forall k \ge 1
\end{equation}
minimizes the total number $N$ of applications of the regeneration operator $R$ required to cover $[0, T]$ without invalidity gaps.
\end{theorem}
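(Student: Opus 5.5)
The argument is a greedy-stays-ahead exchange argument built on Theorem~\ref{thm:optimal_policy}. First I fix the time convention. After the $k$-th initiation at $t_k$, the object $I_k=R(I_{k-1})$ takes effect at $t_k+L_{\text{reg}}$, and its certificate clock starts at $0$ there. It therefore stays valid until $t_k+L_{\text{reg}}+\tau_{\Delta t}(I_k)$.

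With this convention, the next renewal must be effective no later than that instant. So the next initiation must satisfy
\begin{equation}
t_{k+1}\le t_k+\tau_{\Delta t}(I_k).
\end{equation}
This is exactly the safety constraint of Theorem~\ref{thm:optimal_policy} with the clock reset at $t_k+L_{\text{reg}}$. It shows the proposed recursion is feasible, and Lemma~\ref{lem:intersample} gives continuous validity in between. The first step is the same inequality with $I_0$ active from time $0$, giving $t_1\le\tau_{\Delta t}(I_0)-L_{\text{reg}}$. Hypothesis 5 of Theorem~\ref{thm:optimal_policy} ensures $t_1\ge 0$, and the Memoryless Regeneration axiom guarantees $\tau_{\Delta t}(I_k)>0$, so the recursion makes strict progress.

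Next comes the stay-ahead step. Let $\{s_k\}_{k=1}^{M}$ be any valid sequence covering $[0,T]$. I claim by induction that $s_k\le t_k$ for every $k\le M$. The base case is the constraint on the first initiation derived above.

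For the inductive step, I need the covered horizon to depend only on the step index and not on when the regeneration was initiated. This is where memorylessness enters. Since $R$ is Markovian and both sequences apply $R$ to the same chain $I_0,I_1,\dots$, the $k$-th object is the same object $I_k$ in both schedules. Hence
\begin{equation}
s_{k+1}\le s_k+\tau_{\Delta t}(I_k)\le t_k+\tau_{\Delta t}(I_k)=t_{k+1}.
\end{equation}
Consequently, the validity frontier $t_k+L_{\text{reg}}+\tau_{\Delta t}(I_k)$ of the greedy schedule dominates that of any competitor after the same number of applications.

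The conclusion follows by contradiction. Let $N$ be the first index whose frontier reaches $T$ under the greedy schedule. If a competitor used $M<N$ applications, its frontier would satisfy
\begin{equation}
s_M+L_{\text{reg}}+\tau_{\Delta t}(I_M)\le t_M+L_{\text{reg}}+\tau_{\Delta t}(I_M)<T.
\end{equation}
This leaves an invalidity gap before $T$, contradicting coverage. Hence $N\le M$. Finiteness of $N$ needs a uniform lower bound on $\tau_{\Delta t}(I_k)$. The interior condition of the axiom, together with the Lipschitz bound, gives each term at least $\Delta t$ when $\Delta t>0$. In general I would add this bound as a mild assumption if necessary.

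The main obstacle is the inductive step, specifically justifying that the $k$-th object is independent of the initiation time. Everything rests on reading the memoryless axiom as saying that $R(I)$ depends only on $I$ and that the certificate clock resets at renewal. I would state this interpretation explicitly at the outset, since without it an earlier regeneration could in principle yield a longer-lived object and break the exchange argument.
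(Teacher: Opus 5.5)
Your proposal is correct and follows the paper's proof essentially step for step. It has the same base case $t_1 \le \tau_{\Delta t}(I_0) - L_{\text{reg}}$, the same greedy-stays-ahead induction $s_{k+1} \le s_k + \tau_{\Delta t}(I_k) \le t_k + \tau_{\Delta t}(I_k) = t_{k+1}$ from the validity frontier $t_k + L_{\text{reg}} + \tau_{\Delta t}(I_k)$, and the same conclusion on the minimal $N$. Your additions make explicit what the paper leaves implicit rather than taking a different route: feasibility of the greedy schedule, the explicit contradiction for $M<N$, the bound $\tau_{\Delta t}(I_k)\ge\Delta t$ for finiteness, and that the chain $I_k$ does not depend on initiation times.
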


\begin{proof}
We proceed by induction to demonstrate that the greedy maximal lead-time policy maximizes step-wise temporal progress. Let $\{t_k^*\}$ be the optimal initiation sequence generated by $t_1^* = \tau_{\Delta t}(I_0) - L_{\text{reg}}$ and $t_{k+1}^* = t_k^* + \tau_{\Delta t}(I_k)$, and let $\{t_k\}$ be any other valid safe initiation sequence.

\textit{Base Case ($k=1$):} The initial state $I_0$ expires at $\tau_{\Delta t}(I_0)$. Continuous validity requires the first regeneration to complete by $t_1 + L_{\text{reg}} \le \tau_{\Delta t}(I_0)$, giving $t_1 \le \tau_{\Delta t}(I_0) - L_{\text{reg}} = t_1^*$. Thus, $t_1^* \ge t_1$.

\textit{Inductive Step:} Assume $t_k^* \ge t_k$ for some $k \ge 1$. At step
$k$, the renewed object $I_k = R(I_{k-1})$ takes effect at time
$T_k = t_k + L_{\text{reg}}$. By Axiom 1 (Memoryless Regeneration), the
certificate for $I_k$ is restored to
$\text{int}(\mathcal{A}_{\Delta t})$ and remains valid for a duration
$\tau_{\Delta t}(I_k)$, expiring at time
$T_k + \tau_{\Delta t}(I_k) = t_k + L_{\text{reg}} + \tau_{\Delta t}(I_k)$. The
subsequent initiation $t_{k+1}$ must complete by this expiration time:
\begin{equation}
t_{k+1} + L_{\text{reg}} \le t_k + L_{\text{reg}} + \tau_{\Delta t}(I_k) \implies t_{k+1} \le t_k + \tau_{\Delta t}(I_k)
\end{equation}
Applying the inductive hypothesis $t_k^* \ge t_k$, we obtain:
\begin{equation}
t_{k+1}^* = t_k^* + \tau_{\Delta t}(I_k) \ge t_k + \tau_{\Delta t}(I_k) \ge t_{k+1}
\end{equation}
By induction, $t_k^* \ge t_k$ for all $k \ge 1$. Consequently, the sequence $\{t_k^*\}$ covers maximal temporal distance at every step index $k$, requiring the minimal integer $N$ steps such that $t_N^* + L_{\text{reg}} + \tau_{\Delta t}(I_N) \ge T$.
\end{proof}

\subsection{Structural \& Algebraic Properties}
\label{sec:structural-algebraic}

The final theoretical subsection codifies how the framework reacts to complex configurations and multi-constraint environments. Theorem~\ref{thm:fundamental_properties} lays out rules for system composition and constraint modification. It proves that heterogeneous system constraints can be seamlessly combined by taking the logical minimum of their independent persistence horizons, and details the rigorous monotonic behaviors expected when system boundaries are hierarchically bound or expanded.

\begin{theorem}[Fundamental Properties of Persistence]
\label{thm:fundamental_properties}
Let $\mathfrak{C}$ be a CIPS. The framework fundamentally satisfies the following structural properties:
\begin{enumerate}
    \item \textbf{Compositional Horizon:} For the composite system $\mathfrak{C}_{comp} = \mathfrak{C}_1 \cap \mathfrak{C}_2$ constructed by the intersection of their admissibility constraints, the persistence horizon strictly evaluates to $\tau_{comp} = \min(\tau_1, \tau_2)$.
    \item \textbf{Hierarchical Bound:} If an object $I_1$ depends strictly on $I_2$ such that $C_1 \in \mathcal{A}_1 \implies C_2 \in \mathcal{A}_2$, then their persistence horizons are strictly bound: $\tau_{\Delta t}(I_1) \le \tau_{\Delta t}(I_2)$.
    \item \textbf{Monotonicity of Admissibility:} For systems differing only by admissible regions where $\mathcal{A}_1 \subseteq \mathcal{A}_2$, it holds structurally that $\tau_{\mathfrak{C}_1}(I) \le \tau_{\mathfrak{C}_2}(I)$ for all $I \in \mathcal{I}$.
\end{enumerate}
\end{theorem}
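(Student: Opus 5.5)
All three properties reduce to comparing the index sets whose infima define the horizons, so I will argue at the level of sets of (sampled or continuous) times. Throughout, write $E_{\mathcal{B}}(I) := \{ t_k = k\Delta t \ge 0 : C(I,t_k) \notin \mathcal{B} \}$ for the exit set with respect to a region $\mathcal{B}$. Then $\tau_{\Delta t}(I) = \inf E_{\mathcal{A}_{\Delta t}}(I)$, and in the continuous limit the same construction with $t \ge 0$ ranging over all of $\mathbb{R}_{\ge 0}$ gives $\tau(I)$. The basic tool is that $\inf$ is order-reversing with respect to set inclusion: if $E \subseteq E'$ then $\inf E' \le \inf E$. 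I also use that $\inf(E \cup E') = \min(\inf E, \inf E')$, with the convention $\inf \emptyset = \infty$.

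\textbf{Monotonicity (item 3).} I first show that erosion preserves inclusion. If $\mathcal{A}_1 \subseteq \mathcal{A}_2$ and $x \in (\mathcal{A}_1)_{\Delta t}$, then $\mathcal{B}_{L\Delta t}(x) \subseteq \mathcal{A}_1 \subseteq \mathcal{A}_2$, so $x \in (\mathcal{A}_2)_{\Delta t}$. Since the two systems share $C$, any sample with $C(I,t_k) \notin (\mathcal{A}_2)_{\Delta t}$ also satisfies $C(I,t_k) \notin (\mathcal{A}_1)_{\Delta t}$. Hence $E_{(\mathcal{A}_2)_{\Delta t}} \subseteq E_{(\mathcal{A}_1)_{\Delta t}}$, and taking infima gives $\tau_{\mathfrak{C}_1}(I) \le \tau_{\mathfrak{C}_2}(I)$. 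The continuous case $\Delta t = 0$ is identical with $\mathcal{A}_0 = \mathcal{A}$.

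\textbf{Compositional horizon (item 1).} I interpret $\mathfrak{C}_1 \cap \mathfrak{C}_2$ as the system on the product certificate space $\mathcal{C}_1 \times \mathcal{C}_2$ with the max metric, $C_{comp} = (C_1, C_2)$, and $\mathcal{A}_{comp} = \mathcal{A}_1 \times \mathcal{A}_2$. With this choice $C_{comp}$ is still $L$-Lipschitz (taking the larger of the two constants), $\mathcal{A}_{comp}$ is closed, and the Memoryless Regeneration axiom is inherited componentwise. The key identity is that erosion commutes with products under the max metric, because a max-metric ball is the product of the component balls:
\begin{equation}
(\mathcal{A}_1 \times \mathcal{A}_2)_{\Delta t} = (\mathcal{A}_1)_{\Delta t} \times (\mathcal{A}_2)_{\Delta t}.
\end{equation}
It follows that the composite exit set is the union of the component exit sets, so the infimum of the union yields $\tau_{comp} = \min(\tau_1,\tau_2)$. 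If the two systems share a single space $\mathcal{C}$ and the composite is literally $\mathcal{A}_1 \cap \mathcal{A}_2$, the continuous case still goes through directly: $C(I,t) \notin \mathcal{A}_1 \cap \mathcal{A}_2$ iff $C(I,t)$ leaves $\mathcal{A}_1$ or leaves $\mathcal{A}_2$. For sampled horizons the ball condition gives $(\mathcal{A}_1 \cap \mathcal{A}_2)_{\Delta t} = (\mathcal{A}_1)_{\Delta t} \cap (\mathcal{A}_2)_{\Delta t}$, since a ball lies in an intersection exactly when it lies in each set. The argument therefore works in either reading.

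\textbf{Hierarchical bound (item 2) and the main obstacle.} The hypothesis $C_1 \in \mathcal{A}_1 \implies C_2 \in \mathcal{A}_2$ is stated for the unrobustified regions, while $\tau_{\Delta t}$ is defined through the eroded regions $\mathcal{A}_{\Delta t}$. The implication does not automatically transfer to the eroded sets, because the two certificate spaces and the erosion radii may differ. I would therefore read the hypothesis as holding at the level that defines validity, i.e.\ as $V(I_1,t_k)=1 \implies V(I_2,t_k)=1$ for every sample. Its contrapositive gives $E(I_2) \subseteq E(I_1)$, and the order-reversing property of $\inf$ then yields $\tau_{\Delta t}(I_1) \le \tau_{\Delta t}(I_2)$. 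The statement is proved for that reading, and for $\Delta t = 0$ it holds as stated. With $\Delta t > 0$, the stated hypothesis on the unrobustified regions is enough only under an extra assumption that I would impose explicitly: the implication must be metric-compatible, so that every point within $L\Delta t$ of a valid $C_1$ corresponds to a point in $\mathcal{A}_2$ within $L\Delta t$ of $C_2$. This compatibility assumption is the weakest point of the argument.
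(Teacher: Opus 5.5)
Your argument for all three items follows the paper's route. The paper proves (1) by reading composite validity as the conjunction $V_1=1\land V_2=1$, (2) by contraposition, and (3) by the superset relation between exit sets, and each time it finishes by taking infima. What you add is a step the paper leaves implicit. The sampled horizons are defined through \emph{eroded} regions, so (1) and (3) actually require $(\mathcal{A}_1\cap\mathcal{A}_2)_{\Delta t}=(\mathcal{A}_1)_{\Delta t}\cap(\mathcal{A}_2)_{\Delta t}$ and $(\mathcal{A}_1)_{\Delta t}\subseteq(\mathcal{A}_2)_{\Delta t}$. You verify both correctly.

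There is one slip in your product reading. If $L_1\neq L_2$, giving the composite the growth bound $\max(L_1,L_2)$ erodes the component with the smaller constant by more than its own horizon does. For $\Delta t>0$ you then only get $\tau_{comp}\le\min(\tau_1,\tau_2)$. Equality needs either a common $L$, which the theorem tacitly assumes, or the weighted metric $\max(d_1/L_1,\,d_2/L_2)$ with growth bound $1$. Under that metric the erosion ball is $\mathcal{B}_{L_1\Delta t}\times\mathcal{B}_{L_2\Delta t}$.

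On item 2 your suspicion is justified, and the defect lies in the statement rather than in your argument. The paper's contrapositive only controls exits from $\mathcal{A}_1,\mathcal{A}_2$, hence $\tau$, yet it concludes about $\tau_{\Delta t}$. As literally stated, the claim is false for $\Delta t>0$:
\begin{itemize}
\item Take $\mathcal{C}_1=\mathcal{C}_2=\mathbb{R}$, $\mathcal{A}_1=\mathcal{A}_2=[0,\infty)$ and $L=\Delta t=1$, so both eroded regions equal $[1,\infty)$.
\item Let $C_1(I_1,t)\equiv 10$ and $C_2(I_2,t)=\max(1.5-t,\,0.5)$.
\item Both maps are $1$-Lipschitz and start in $\operatorname{int}([1,\infty))$, and $C_1\in\mathcal{A}_1\Rightarrow C_2\in\mathcal{A}_2$ holds at every $t$.
\item Yet $\tau_{\Delta t}(I_1)=\infty>1=\tau_{\Delta t}(I_2)$.
\end{itemize}
So proving the $V$-level reading and the $\Delta t=0$ case is exactly right.

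Your proposed compatibility condition, however, runs in the wrong direction. Knowing that points near $C_1(I_1,t_k)$ correspond to points of $\mathcal{A}_2$ near $C_2(I_2,t_k)$ does not place the whole ball $\mathcal{B}_{L\Delta t}(C_2(I_2,t_k))$ inside $\mathcal{A}_2$. You need every point of that ball to be certified by some point of $\mathcal{B}_{L\Delta t}(C_1(I_1,t_k))$. For example, it suffices that $C_2(I_2,t)=\phi(C_1(I_1,t))$ with $\phi(\mathcal{A}_1)\subseteq\mathcal{A}_2$ and $\mathcal{B}_{L\Delta t}(\phi(x))\subseteq\phi(\mathcal{B}_{L\Delta t}(x))$ for all $x$. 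Then $\mathcal{B}_{L\Delta t}(x)\subseteq\mathcal{A}_1$ gives $\mathcal{B}_{L\Delta t}(\phi(x))\subseteq\phi(\mathcal{A}_1)\subseteq\mathcal{A}_2$.
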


\begin{proof}
(1) The composite admissible state demands simultaneous validity, requiring the Boolean conjunction $V_1 = 1 \land V_2 = 1$. The first-exit time defined by the infimum over this logical conjunction simplifies mathematically to the minimum of individual exit times, $\min(\tau_1, \tau_2)$. 
(2) This is established directly via the logical contrapositive: a forced exit from $\mathcal{A}_2$ strictly enforces a simultaneous or prior exit from $\mathcal{A}_1$. The first-exit time of $I_1$ therefore cannot exceed that of $I_2$.
(3) Because $\mathcal{A}_1 \subseteq \mathcal{A}_2$, any geometric point evaluating outside $\mathcal{A}_2$ must implicitly evaluate outside $\mathcal{A}_1$. The exit set for $\mathcal{A}_1$ acts as a superset of the exit set for $\mathcal{A}_2$, guaranteeing the infimum first-exit time for $\mathfrak{C}_1$ occurs at or before that of $\mathfrak{C}_2$.
\end{proof}

\section{Canonical Sampled CIPS Runtime Engine}
\label{sec:cips_engine}

To bridge the abstract mathematical framework with embedded hardware
implementation, this section operationalizes the theory into an
executable runtime architecture.

Algorithm~\ref{alg:cips_runtime_engine} directly operationalizes
Theorem~\ref{thm:optimal_policy} (and its underlying persistence bounds)
by continuously evaluating the abstract state certificate against the
derived robust admissible region, scheduling asynchronous state
regeneration tasks to compensate for execution latency.

\begin{algorithm}[H]
  \SetKwInOut{Input}{Input}
  \SetKwInOut{Output}{Output}
  \caption{Canonical Sampled CIPS Closed-Loop Execution Engine}
  \label{alg:cips_runtime_engine}

  \Input{Instantiated CIPS Tuple $\mathfrak{C} = (\mathcal{I}, \mathcal{C}, \mathcal{A}, C, R)$, Max metric drift bound $L$, Hardware sampling step $\Delta t$, Processing/Network latency $L_{\text{reg}}$}
  \Output{Discrete certification execution schedule $\{t_k\}_{k=1}^K$, Certificate trajectory $C(I(t)) \in \mathcal{C}$}

  \BlankLine
  \tcp{Phase 1: Robust Admissible Region Contraction}
  Construct metric-contracted robust admissible region via metric inner buffer:
  $\mathcal{A}_{\Delta t} \gets \{ c \in \mathcal{A} \mid d_{\mathcal{C}}(c, \mathcal{C} \setminus \mathcal{A}) \ge L \cdot \Delta t \}$\;

  \BlankLine
  \tcp{Phase 2: Closed-Loop Runtime Execution Loop}
  Initialize simulation time $t \gets 0$, trigger counter $k \gets 0$, $t_{\text{last}} \gets -\infty$\;
  Initialize latency queue $\mathcal{Q} \gets \emptyset$ \tcp*{e.g., standard FIFO implementation}

  \While{System Operational}{
    Acquire current system information state: $I \in \mathcal{I}$\;
    Evaluate real-time state certificate: $\gamma(t) \gets C(I, t) \in \mathcal{C}$\;
    
    \BlankLine
    \tcp{Sub-step A: Asynchronous Latency Pipeline Processing}
    \If{$\mathcal{Q} \neq \emptyset$ \textbf{and} $t \ge \text{Head}(\mathcal{Q}).t_{\text{complete}}$}{
      Dequeue completed execution task from $\mathcal{Q}$\;
      Execute state certification reset mapping: $I(t) \gets R(I(t))$\;
    }
    
    \BlankLine
    \tcp{Sub-step B: Closed-Loop Membership Test \& Task Scheduling}
    \If{$\gamma(t) \notin \mathcal{A}_{\Delta t}$ \textbf{and} $(t - t_{\text{last}}) \ge \Delta t$}{
      $k \gets k + 1$\;
      $t_k \gets t$\;
      $t_{\text{last}} \gets t$\;
      Schedule state regeneration task with latency delay: $t_{\text{completion}} \gets t + L_{\text{reg}}$\;
      Enqueue state update task $(t_k, t_{\text{completion}})$ into latency queue $\mathcal{Q}$\;
    }
    
    $t \gets t + \Delta t$\;
  }
\end{algorithm}


\subsection{Operational Phases of the CIPS Engine}
Algorithm~\ref{alg:cips_runtime_engine} executes the formal tuple $\mathfrak{C} = (\mathcal{I}, \mathcal{C}, \mathcal{A}, C, R)$ across two distinct operational phases, fully preserving the metric-space abstraction of CIPS:

\begin{enumerate}
\item \textbf{Robust admissible region contraction (Phase 1):} Prior to
  runtime execution, the engine receives the ground-truth
  physical/logical admissible set $\mathcal{A} \subseteq \mathcal{C}$. Based on the hardware
  sampling interval $\Delta t$ and metric drift bound $L$, it constructs the
  robust admissible region
  $\mathcal{A}_{\Delta t} = \{ c \in \mathcal{A} \mid d_{\mathcal{C}}(c, \mathcal{C} \setminus \mathcal{A}) \ge L \cdot \Delta t \}$. This
  coordinate-free metric erosion guarantees that any unobserved drift
  occurring during a discrete sampling tick $[t, t + \Delta t)$ cannot exit
  the true admissible domain $\mathcal{A}$.
\item \textbf{Closed-loop evaluation and latency pipeline (Phase 2):} At
  each discrete tick $\Delta t$, the engine acquires the current information
  state $I \in \mathcal{I}$, which may represent physical trajectories, estimator
  state distributions, or communication packets, and evaluates the
  metric certificate $\gamma(t) = C(I, t)$. When
  $\gamma(t) \notin \mathcal{A}_{\Delta t}$, a regeneration task is enqueued into the latency
  queue $\mathcal{Q}$ (implemented here as a FIFO structure for simplicity) with
  completion timestamp $t + L_{\text{reg}}$. By
  Theorem~\ref{thm:optimal_policy}, provided the latency assumptions
  hold, the certificate drifts safely through the buffer region
  $\mathcal{A} \setminus \mathcal{A}_{\Delta t}$ without violating the true admissible boundary before
  the reset mapping $R(I(t))$ takes effect.
\end{enumerate}

\subsection{Computational Complexity and Embedded Feasibility}
The execution engine is explicitly designed for resource-constrained cyber-physical hardware:
\begin{itemize}
\item \textbf{Offline initialization:} The set contraction
  $\mathcal{A}_{\Delta t}$ is performed once during system initialization, incurring
  zero runtime overhead during active operation.
\item \textbf{Online runtime overhead:} Per control cycle $\Delta t$, the
  online engine performs exactly one information state acquisition, one
  certificate mapping evaluation $\gamma(t) = C(I, t)$, and one
  set-membership check $\gamma(t) \in \mathcal{A}_{\Delta t}$. Assuming the computational
  complexity of evaluating the certificate mapping $C$ is denoted by
  $T_C$, the overall worst-case online runtime complexity is strictly:
    \begin{equation}
        \mathcal{O}(T_C) \quad \text{per control cycle},
    \end{equation}
    with $\mathcal{O}(1)$ auxiliary queue management operations. 
\end{itemize}

The runtime engine is agnostic to the internal construction of the
certificate mapping $C$. Consequently, Lyapunov functions, barrier
functions, matrix norms, stochastic confidence bounds, optimization
residuals, and application-specific certificates may be substituted
without modifying the execution engine itself. For standard distance
metrics, control barrier functions, or matrix-norm certificates, $T_C$
is $\mathcal{O}(1)$ or $\mathcal{O}(n)$, making the runtime engine exceptionally lightweight
for embedded microcontrollers and real-time operating system (RTOS) task
schedulers.

\section{Domain Instantiations and Case Studies}
\label{sec:doma-inst-case}

We validate the 5-tuple abstraction
$\mathfrak{C} = (\mathcal{I}, \mathcal{C}, \mathcal{A}, C, R)$ across two physical autonomous system
implementations. By explicitly mapping domain-specific mechanics into
the CIPS framework, we demonstrate how disparate systems mathematically
reduce to the same operational surrogate persistence problem. Crucially, we show how existing scheduling paradigms are simply sub-optimal CIPS implementations, justifying that CIPS acts as a universal covering theory for cyber-physical systems.

\subsection{Case Study 1: V2V Platooning under Communication Latency}
Consider a follower vehicle maintaining an inter-vehicle gap
$d(t) = x_L(t) - x_T(t)$ (c.f Section~\ref{sec:v2v-platooning-setup})
behind a leading vehicle that executes hard braking at
$t = 0.6\,\text{s}$. The system operates under a V2V sampling interval
$\Delta t = 0.30\,\text{s}$ and a network/regeneration latency
$L_{\text{reg}} = 0.10\,\text{s}$.

The system maps directly to the \textit{canonical} CIPS tuple as
follows:
\begin{itemize}
\item \textbf{Object ($\mathcal{I}$):} The follower vehicle's physical state and trajectory.
\item \textbf{Certificate Space ($\mathcal{C}$):} The relative separation space $\mathbb{R}$.
\item \textbf{Admissible Set ($\mathcal{A}^*$):} The exact ground-truth safe separation zone $[d_{\text{min}}, \infty)$, where $d_{\text{min}} = 3.0\,\text{m}$.
\item \textbf{Certificate Mapping ($C^*$):} The inter-vehicle gap $d(t)$, bounded by a maximum relative kinematic drift rate $L = 8.5\,\text{m/s}$.
\item \textbf{Regeneration Operator ($R$):} The processing of the V2V wireless packet and subsequent activation of emergency braking.
\end{itemize}

\textbf{Theoretical Validation:} As illustrated in
Figure~\ref{fig:v2v_cips} (b), the ground-truth boundary is eroded to
form the robust contracted boundary
$\partial \mathcal{A}_{\Delta t} = d_{\text{min}} + L \Delta t = 5.55\,\text{m}$ to account for the
zero-order hold interval. By evaluating the exact first-exit sampled
persistence horizon $\tau_{\Delta t}^*$, Theorem~\ref{thm:optimal_policy}
(Optimal Single-Step Regeneration Policy with Latency) mathematically
guarantees safety by computing the latency-compensated broadcast trigger
$t_{\text{trigger}} = \tau_{\Delta t}^* - L_{\text{reg}}$. Furthermore,
Lemma~\ref{lem:intersample} (Inter-Sample Safety Guarantee) strictly
ensures that the vehicle remains within the ground-truth admissible set
$\mathcal{A}^*$ throughout the entirety of the inter-sample interval prior to
regeneration.

\begin{figure}[htbp]
  \centering
  \includegraphics[width=0.6\textwidth]{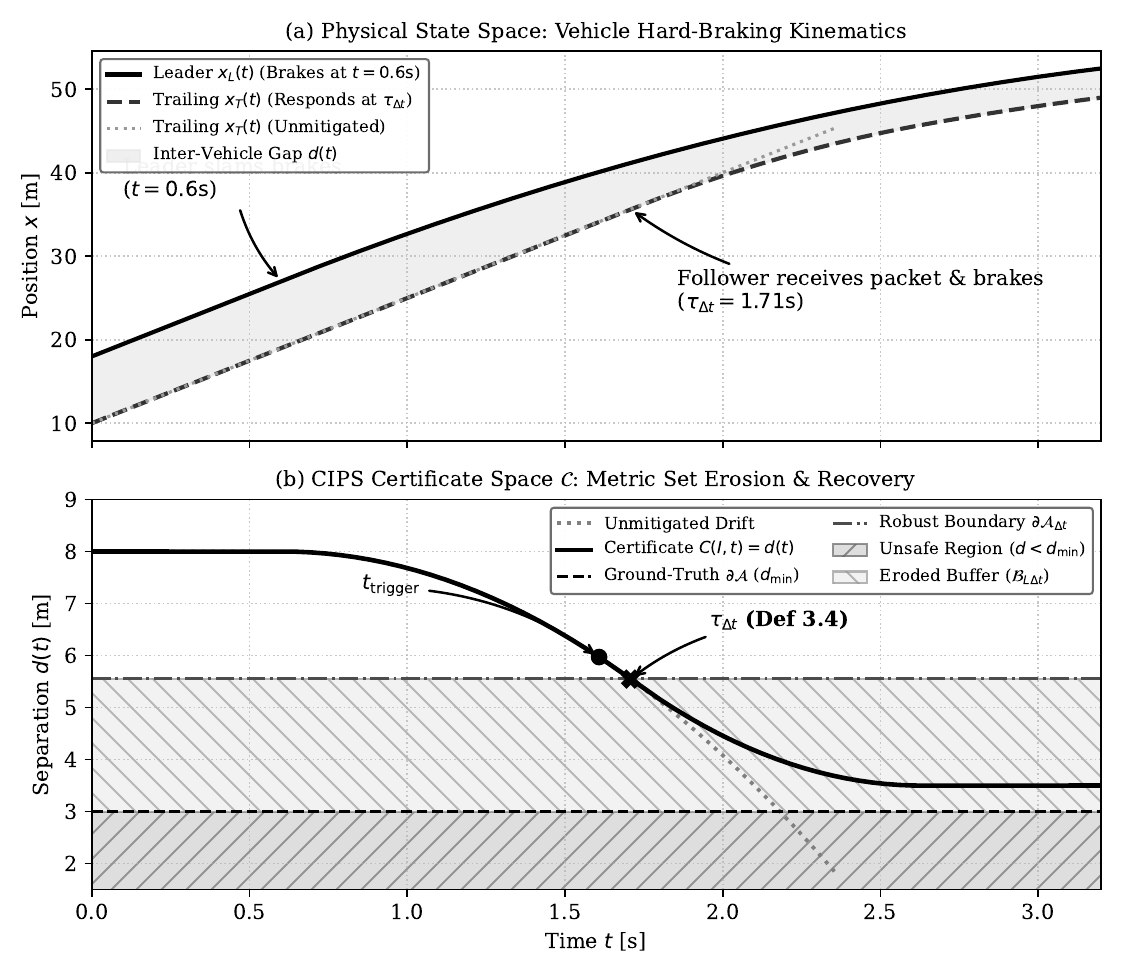}
  \caption{Canonical CIPS applied to 1D V2V Platooning. (a) Physical
    vehicle kinematics during a hard-braking event. (b) The mapped
    certificate space $\mathcal{C}$ demonstrating metric set erosion
    ($\mathcal{B}_{L\Delta t}$) and the precise theoretical trigger evaluated prior to
    boundary violation.}
  \label{fig:v2v_cips}
\end{figure}

\subsection{Case Study 2: 3D UAV Intermittent SLAM Navigation}
\label{sec:case-study-2}

Consider an autonomous quadrotor executing forward flight while
accumulating cross-track drift $(Y(t), Z(t))$. The unmitigated
cross-track error envelope is driven by two competing operational
factors:
\begin{enumerate}
\item \textbf{Deterministic Drift:} A constant translation bias growing
  linearly at rate $L = 0.55\,\text{m/s}$, modeled mathematically as:
  $E_{\text{det}}(t) = L t$.
\item \textbf{Stochastic Sensor Noise:} A high-probability error
  envelope expanding over time governed by the noise coefficient
  $\sigma_Z = 0.95$. This is modeled as a deterministic confidence tube:
  $E_{\text{stoch}}(t) = \sigma_Z \sqrt{t}$.
\end{enumerate}

Leveraging Lemma~\ref{lem:surrogate_safety} (Conservative Surrogate
Safety Lemma) and Definition~\ref{def:approach_b} (High-Probability
Envelopes), the exact continuous certificate mapping evaluates to the
deterministic worst-case composite surrogate error envelope:
$$C_{\text{upper}}^*(I,t) = \max(E_{\text{det}}(t), E_{\text{stoch}}(t))$$

To reset this accumulated drift, the quadrotor must execute an onboard
visual-inertial SLAM routine. Initiating this routine incurs a fixed CPU
wake-up and processing latency of $L_{\text{reg}} = 0.35\,\text{s}$.

The cyber-physical system maps to the \textit{canonical} CIPS tuple as
follows:
\begin{itemize}
\item \textbf{Object ($\mathcal{I}$):} The UAV forward flight trajectory state.
\item \textbf{Certificate Space ($\mathcal{C}$):} The 2D cross-track error plane $\mathbb{R}^2$.
\item \textbf{Admissible Set ($\mathcal{A}^*$):} The outer safety cylinder $\{ (y,z) \mid \sqrt{y^2 + z^2} \le d_{\text{admissible}} \}$, with radius $d_{\text{admissible}} = 1.5\,\text{m}$.
\item \textbf{Certificate Mapping ($C^*$):} The worst-case composite surrogate error envelope $C_{\text{upper}}^*(I,t)$ bounding both translation and noise.
\item \textbf{Regeneration Operator ($R$):} CPU wake-up and complete SLAM localization re-keying to restore state certification to the interior of the robust admissible region (resetting drift to zero).
\end{itemize}

\textbf{Theoretical Validation:} Per
Theorem~\ref{thm:fundamental_properties} (Fundamental Properties of
Persistence), the system's persistence evaluates strictly to the
compositional minimum of the deterministic and stochastic horizons:
$\tau_{\text{comp}}^* = \min(\tau_{\text{det}}, \tau_{\text{stoch}})$. As shown
in Figure~\ref{fig:uav_cips}, triggering CPU wake-up exactly at the
optimal lead time
$t_{\text{trigger}} = \tau_{\text{comp}}^* - L_{\text{reg}}$ safely bounds
the inner robust cylinder. By Theorem~\ref{thm:minimal_sequence}
(Minimal Regeneration Sequence with Latency), sequentially chaining the
memoryless regeneration operator $R$ at this exact latency-aware horizon
prevents the unmitigated trajectory from breaching the outer cylinder
boundary, mathematically guaranteeing crash avoidance over the
operational timeline $[0, 8]$ sec.

\begin{figure}[htbp]
  \centering
  \includegraphics[width=0.8\textwidth]{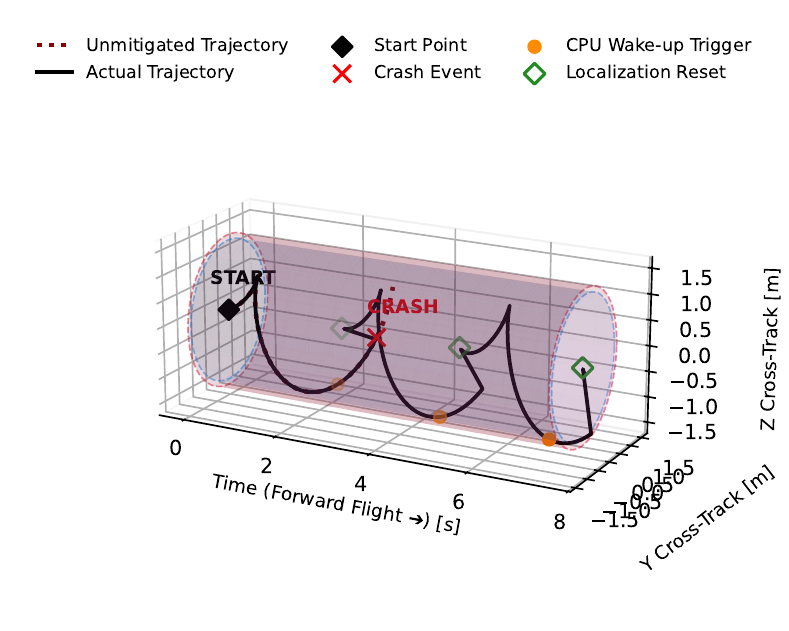}
  \caption{Canonical CIPS applied to 3D UAV Intermittent SLAM. The actual
    trajectory (black) is strictly confined within the ground-truth
    admissible cylinder (red) by applying the memoryless localization
    reset (green diamonds) exactly at the end of the compositional
    persistence horizon.}
  \label{fig:uav_cips}
\end{figure}

\subsection{Experimental Setup and Baseline Implementation}
\label{sec:exper-setup-basel}

To rigorously evaluate the framework, both case studies were evaluated
over a $100\,\text{s}$ operational horizon using four execution
strategies. Crucially, as established in
Theorem~\ref{thm:universal_safety_bound}, all safe scheduling policies
are structurally isomorphic to CIPS; they differ only in how
conservatively they define their admissible sets or schedule
evaluations. We benchmark the exact Canonical closed-loop CIPS against
three distinct execution paradigms:

\begin{enumerate}
\item \textbf{Static Periodic Execution (Degenerate CIPS):} Operates as
  a naive CIPS implementation that heavily contracts the admissible set
  into an artificial, purely temporal bound. It evaluates or transmits
  state certifications at a fixed high sampling frequency
  ($25\,\text{Hz}$ for Case Study 1; $10\,\text{Hz}$ for Case Study 2),
  ignoring the true physical geometry of $\mathcal{A}^*$.
\item \textbf{Heuristic Event-Triggered Control (Sub-Level CIPS):}
  Operates as a conservative CIPS implementation evaluating continuously
  against a static, artificially contracted sub-level set
  $\mathcal{A}_\pi \subset \mathcal{A}^*$. To ensure a fair comparison with zero safety violations
  ($0$ crashes), static safety margins were selected:
  $d_{\text{thresh}} = 6.20\,\text{m}$ for V2V platooning (providing a
  static $3.20\,\text{m}$ buffer above $d_{\text{min}} = 3.0\,\text{m}$)
  and $r_{\text{thresh}} = 1.25\,\text{m}$ for 3D UAV SLAM (a static
  $0.25\,\text{m}$ buffer inside the $1.50\,\text{m}$ safety cylinder).
\item \textbf{Self-Triggered CIPS (Open-Loop CIPS-STC):} Operates as an
  open-loop predictive implementation of CIPS. At each state
  certification reset, it computes the theoretical worst-case maximal
  persistence horizon $\tau^*$ assuming the maximum bounds on drift rate
  $L$ and proactively schedules the next execution at
  $t_{\text{trigger}} = \tau^* - L_{\text{reg}}$, completely eliminating
  intermediate continuous state polling.
\item \textbf{Canonical CIPS (Proposed Closed-Loop CIPS-ETC):} Evaluates
  the exact canonical ground-truth tuple $\mathfrak{C}^*$ against the true
  geometric boundary $\partial \mathcal{A}^*$. It continuously monitors the physical
  state against the latency-compensated boundary and triggers state
  regeneration strictly at
  $t_{\text{trigger}} = \tau^* - L_{\text{reg}}$ per
  Theorem~\ref{thm:optimal_policy}.
\end{enumerate}

We restrict our benchmarking to these static and heuristic policies that
operate under the same global Lipschitz assumptions as the proposed
CIPS. Dynamic ETC policies, which exploit local state-dependent bounds,
are omitted to prevent an asymmetric comparison and are instead deferred
to future work.

\subsection{Empirical Analysis of Benchmark Results}
\label{sec:empir-analys-benchm}

\begin{table}[htbp]
  \centering
  \caption{Quantitative Empirical Benchmark Comparison Across Operational Domains ($100\,\text{s}$ Simulation Horizon)}
  \label{tab:empirical_results}
  \setlength{\tabcolsep}{4pt}
  \resizebox{\linewidth}{!}{
    \begin{tabular}{lcccc}
      \toprule
      \textbf{Domain \& Strategy} & \textbf{Total Interventions} & \textbf{Resource Reduction} & \textbf{Boundary Clearance} & \textbf{Safety Violations} \\
      \midrule
      \textbf{Case Study 1: V2V Platooning} & & & & \\
      Static Periodic (Degenerate CIPS) & 2127 packets & Baseline ($0\%$) & $+4.50\,\text{m}$ & \textbf{0} \\
      Self-Triggered CIPS (CIPS-STC) & 248 packets & $88.34\%$ & $+3.85\,\text{m}$ & \textbf{0} \\
      Heuristic ETC (Sub-Level CIPS) & 25 packets & $98.82\%$ & $+1.80\,\text{m}$ & \textbf{0} \\
      \textbf{Canonical CIPS (Proposed)} & \textbf{15 packets} & \textbf{99.30\%} & \textbf{+1.41\,\text{m}} & \textbf{0} \\
      \midrule
      \textbf{Case Study 2: 3D UAV SLAM} & & & & \\
      Static Periodic (Degenerate CIPS) & 286 wake-ups & Baseline ($0\%$) & $+0.94\,\text{m}$ & \textbf{0} \\
      Heuristic ETC (Sub-Level CIPS) & 48 wake-ups & $83.22\%$ & $+0.13\,\text{m}$ & \textbf{0} \\
      Self-Triggered CIPS (CIPS-STC) & 47 wake-ups & $83.57\%$ & $+0.11\,\text{m}$ & \textbf{0} \\
      \textbf{Canonical CIPS (Proposed)} & \textbf{40 wake-ups} & \textbf{86.01\%} & \textbf{+0.00\,\text{m}} & \textbf{0} \\
      \bottomrule
    \end{tabular}%
  }
\end{table}

The benchmark results in Table~\ref{tab:empirical_results} validate two
core theoretical properties of the framework across continuous physical domains:

1. \textbf{Empirical Validation of Maximal Certified Persistence:} As dictated by the \textit{Maximal Certified Persistence} Corollary~\ref{cor:maximal_persistence}, the Canonical CIPS identifies the absolute upper bound of safe operation time. Because it evaluates directly against the exact ground-truth boundary rather than a sub-level set, its computed horizon $\tau^*$ strictly dominates the operational durations achievable by periodic, heuristic, or open-loop self-triggered policies:
\begin{itemize}
    \item \textbf{V2V Platooning (Case Study 1):} Canonical CIPS initiates packet broadcast precisely when the inter-vehicle gap reaches the robust boundary $\partial \mathcal{A}_{\Delta t} = 5.55\,\text{m}$ (minimum physical gap $4.41\,\text{m}$). Accounting for transmission latency ($L_{\text{reg}} = 0.10\,\text{s}$), control updates arrive exactly in time to arrest vehicle drift, yielding a minimum separation clearance of $+1.41\,\text{m}$ above $d_{\text{min}} = 3.0\,\text{m}$. By contrast, CIPS-STC predicts wake-ups based on the global worst-case relative drift $L = 8.5\,\text{m/s}$. During steady cruise where actual drift is minimal, CIPS-STC is forced to trigger prematurely and repeatedly ($248$ packets, $+3.85\,\text{m}$ clearance) because it cannot observe actual state stability open-loop.
    \item \textbf{3D UAV SLAM (Case Study 2):} Canonical CIPS maintains state certification up to the exact boundary of the outer safety cylinder, achieving $+0.00\,\text{m}$ clearance ($1.50\,\text{m}$ maximum drift) at the instant localization is reset. CIPS-STC operates remarkably close to closed-loop performance in this domain ($47$ wake-ups, $+0.11\,\text{m}$ clearance) because the stochastic/deterministic drift envelope remains strictly active across cycles. However, Canonical CIPS still outperforms STC by observing real-time trajectory state variations rather than relying strictly on the worst-case envelope prediction.
\end{itemize}

2. \textbf{Optimal Resource Allocation:} 
Because Canonical CIPS maximizes the continuous safe temporal span $t_k^\pi \le \tau^*$ at every step using closed-loop boundary observation, it mathematically requires the fewest possible sequential actions to cover a given operational window:
\begin{itemize}
    \item \textbf{V2V Platooning (Case Study 1):} Canonical CIPS eliminates $99.30\%$ of network transmissions compared to continuous periodic execution ($15$ vs. $2127$ packets), outperforms open-loop CIPS-STC by $93.95\%$ ($15$ vs. $248$ packets), and reduces overhead by $40.00\%$ relative to heuristic ETC ($15$ vs. $25$ packets).
    \item \textbf{3D UAV SLAM (Case Study 2):} Canonical CIPS reduces onboard CPU wake-up interventions by $86.01\%$ over continuous periodic sampling ($40$ vs. $286$ wake-ups), while outperforming both heuristic sub-level ETC ($48$ wake-ups) and open-loop CIPS-STC ($47$ wake-ups).
\end{itemize}

These empirical evaluations confirm that evaluating the canonical
persistence horizon on latency-compensated contracted regions yields the
Pareto-optimal trade-off between resource consumption and physical
system viability, while highlighting the clear efficiency advantage of
closed-loop boundary monitoring over open-loop self-triggered
prediction.

\section{Related Work}
\label{sec:related-work}

The problem of maintaining intermittent state validity spans control theory, robotics, information theory, and distributed computer systems. While individual fields have developed specialized techniques for state verification, CIPS provides a domain-agnostic meta-theory by separating certificate construction from certificate persistence.

\subsection{Set Invariance, Viability, and Barrier Functions}
\label{sec:set-invar-viab}

In control theory, safety and state constraints are traditionally
enforced via invariant sets and viability theory. Viability
theory~\cite{aubin2011viability} characterizes the viability kernel: the
set of initial states from which there exists at least one trajectory
remaining indefinitely within constrained bounds. For constrained
dynamical systems, Controlled Invariant Sets~\cite{blanchini1999set} and
Control Barrier Functions (CBFs)~\cite{ames2019control} construct
continuous-time scalar mappings $h(x) \ge 0$ whose superlevel sets define
safe operating regions. Similarly, reachability
analysis~\cite{althoff2014reachability} computes forward reachable sets
to guarantee boundary avoidance.

While CBFs and invariant sets solve the problem of \textit{certificate
construction} for continuous ODEs, they require continuous or
high-frequency evaluation to enforce safety constraints. CIPS does not
replace these barrier functions; rather, it uses them as instances of
the certificate mapping $C(I,t)$, analyzing how long such continuous
certificates remain valid under latency, zero-order holds, and discrete
sampling without requiring continuous re-evaluation.

\subsection{Runtime Verification and Enforcement}
\label{sec:runtime-verification}

Beyond off-line certificate construction, runtime verification (RV)
continuously monitors system trajectories against formal specifications,
often expressed in Signal Temporal Logic
(STL)~\cite{maler2004monitoring}. While RV is traditionally a passive
monitoring process, runtime enforcement (RE) actively intervenes to
preserve system safety. Architectures such as
Simplex~\cite{sha2001using} utilize a highly verified, conservative
safety controller as a fallback for unverified complex controllers,
while shielding techniques~\cite{bloem2015shield} act as a pre- or
post-filter to explicitly block unsafe actions in autonomous and
reinforcement learning environments.

While RV and RE provide critical execution-time safety nets, they typically assume idealized, instantaneous, or fixed-rate discrete monitoring. CIPS complements these paradigms by mathematically bounding the maximum allowable delay between monitor evaluations, effectively dictating \textit{when} a shield or monitor must sample the system to maintain continuous-time safety guarantees.

\subsection{Event-Triggered and Self-Triggered Control}
\label{sec:event-triggered-self}

To reduce computational and communication overhead in cyber-physical
systems, Event-Triggered Control (ETC) and Self-Triggered Control (STC)
substitute periodic execution with state-dependent
updates~\cite{tabuada2007event, heemels2012introductory}. ETC
continuously monitors state-error thresholds to trigger control updates,
whereas STC uses model predictions to proactively compute the next
update time.

CIPS subsumes STC and ETC within a unified metric framework. As proven
in Theorem~\ref{thm:universal_safety_bound}, any safe ETC or STC
triggering policy mathematically reduces to evaluating a canonical CIPS
persistence horizon over a contracted sub-level set $\mathcal{A}_\pi$. Furthermore,
CIPS extends STC principles to non-dynamical domains, such as sensor
wake-up latencies, memory cache invalidation, and asynchronous network
packets.

\subsection{Age of Information (AoI) and Network Caching}
\label{sec:age-information-aoi}

In information theory and network engineering, the Age of Information
(AoI) metric quantifies fresh data timeliness at a destination, defined
as $\Delta(t) = t - U(t)$, where $U(t)$ is the generation timestamp of the
most recent update~\cite{yates2021age}. While AoI measures temporal
freshness linearly, it does not account for state dynamics or safety
boundaries. CIPS generalizes linear age tracking by mapping time decay
through $L$-Lipschitz bounds in a metric state space $\mathcal{C}$, linking
physical drift directly to temporal validity.

In computer science, distributed database materialization, TTL
(Time-To-Live) cache invalidation, and bounded staleness
protocols~\cite{bernstein1987concurrency} preserve system consistency by
bounding information freshness. CIPS provides the mathematical bridge
linking these discrete software caching architectures directly to
physical state propagation in autonomous robotics.

\section{Conclusion and Future Work}
\label{sec:conclusion}

In this paper, we introduced the theory of Certified Information
Persistence Systems (CIPS), a universal mathematical framework designed
to compute and enforce the maximal certified persistence of information
in cyber-physical systems. Moving beyond heuristic or domain-specific
scheduling routines, CIPS provides an axiomatic foundation that formally
separates the continuous evolution of physical state validity from
discrete, memoryless control interventions.

Our central representation theorem establishes that CIPS is not merely a
novel scheduling algorithm, but a foundational covering framework. We
proved that any safe, causal scheduling policy, such as static periodic
execution or event-triggered control, is structurally isomorphic to a
conservative sub-level evaluation within a canonical CIPS. By evaluating
against exact ground-truth geometric boundaries rather than artificial
static thresholds, the canonical CIPS isolates the maximal certified
persistence horizon bounding safe autonomous operation under the
specified metric dynamics.

Furthermore, by explicitly integrating discrete execution latency and
digital sampling through robust set contraction, the framework bridges
the gap between continuous physical dynamics and discrete digital
execution. As demonstrated across diverse autonomous domains,
dynamically targeting this latency-compensated maximal horizon yields
the least conservative certified scheduling policy. Consequently, CIPS
minimizes critical computational and network interventions while
mathematically guaranteeing continuous physical safety, providing a
rigorous, domain-agnostic foundation for the next generation of highly
constrained, safety-critical cyber-physical systems.

While CIPS provides a rigorous foundational meta-theory for persistence
scheduling, transitioning the framework to highly complex, non-linear
cyber-physical systems presents several avenues for future research.

\textbf{State-Dependent Drift Bounds:} The current formulation relies on
a global worst-case $L$-Lipschitz growth bound to guarantee safety
across the entire operational space. In highly non-linear domains,
relying on a global $L$ can introduce conservatism. Future work will
extend the formal 5-tuple to accommodate state-dependent local drift
bounds, $L(x)$, which would dynamically relax the robust set erosion and
yield even longer, less conservative persistence horizons.

\textbf{Computational Tractability of Set Erosion:}
Algorithm~\ref{alg:cips_runtime_engine} relies on the offline geometric
erosion of the admissible set, $\mathcal{A}_{\Delta t}$. While trivial for 1D gaps or
2D cylinders, computing the exact inner metric buffer for arbitrary,
high-dimensional, non-convex constraint sets is challenging. Future
implementations will investigate computational approximations, such as
using sums-of-squares (SOS) programming or neural control barrier
functions (NCBFs), to safely under-approximate the robust region
$\mathcal{A}_{\Delta t}$ without requiring exact geometric erosion.

\textbf{Advanced Stochastic Handling:} Currently, stochasticity is
handled via deterministic high-probability bounding envelopes. While
mathematically sound, this forces the system to schedule based on
worst-case expanding confidence bounds. Integrating true stochastic
reachability analysis into the certificate mapping $C(I,t)$ could allow
CIPS to execute true risk-aware scheduling rather than bounding-tube
approximations.


\bibliographystyle{plain}
\bibliography{ref}

\end{document}